%% file: main-arxiv.tex
\documentclass[%
  cleveref,thm-restate,
  a4paper,UKenglish,
]{lipics-v2021}
\hideLIPIcs  %
\nolinenumbers %

\DeclareFontShape{T1}{lmr}{m}{scit}{<->ssub * lmr/m/scsl}{} %
\usepackage{graphicx}
\input{settings/graphics}
\graphicspath{./graphics/}%

\usepackage{xspace}

\usepackage{tabularray}
\UseTblrLibrary{booktabs}

\usepackage{cleveref}

\usepackage{mathtools}
\usepackage{amsthm}
\usepackage{thm-restate}
\DeclarePairedDelimiter{\set}{\{}{\}}
\DeclarePairedDelimiter{\Path}{\langle}{\rangle}
\DeclarePairedDelimiter{\abs}{\lvert}{\rvert}

\newcommand{\LL}{\mathcal{L}}

\newcommand{\OR}{\texttt{or}\xspace}
\newcommand{\AND}{\texttt{and}\xspace}
\newcommand{\NOT}{\texttt{not}\xspace}
\newcommand{\True}{\texttt{true}\xspace}
\newcommand{\False}{\texttt{false}\xspace}

\newcommand{\init}{\eta}

\newcommand{\Natural}{\mathbb{N}}

\newcommand{\pfdir}[1]{\proofsubparagraph{#1}}

\newcounter{problem}

\title{From FPT to W[P]: Classifying Zero Forcing, Power Domination and Their Variants}

\author{Max Göttlicher}%
{Karlsruhe Institute of Technology, Germany}%
{max.goettlicher@kit.edu}%
{https://orcid.org/0000-0002-5556-4140}%
{Supported by the German Research Foundation (DFG) as part of the Research Training Group GRK 2153 ``Energy Status Data - Informatics Methods for its Collection, Analysis and Exploitation''.}
\author{Lennard Hofmann}{Karlsruhe Institute of Technology, Germany}{}{}{}
\author{Christoph Niederbudde}{Karlsruhe Institute of Technology, Germany}{}{}{}

\authorrunning{M. Göttlicher, L. Hofmann, C. Niederbudde} %

\Copyright{Max Göttlicher, Lennard Hofmann, Christoph Niederbudde} %

\ccsdesc[500]{Theory of computation~Parameterized complexity and exact algorithms}
\ccsdesc[500]{Theory of computation~Problems, reductions and completeness}
\ccsdesc[300]{Theory of computation~W hierarchy}

\keywords{zero forcing, power domination, l-round PDS, k-forcing, propagation time, parameterized complexity, fixed-parameter tractability, kernelization, W[P]-hardness}

\category{} %

\relatedversion{} %
\EventEditors{John Q. Open and Joan R. Access}
\EventNoEds{2}
\EventLongTitle{42nd Conference on Very Important Topics (CVIT 2016)}
\EventShortTitle{CVIT 2016}
\EventAcronym{CVIT}
\EventYear{2016}
\EventDate{December 24--27, 2016}
\EventLocation{Little Whinging, United Kingdom}
\EventLogo{}
\SeriesVolume{42}
\ArticleNo{23}

\begin{document}

  \maketitle

  \begin{abstract}
    \textsc{Zero Forcing (ZF)} and \textsc{Power Dominating Set (PDS)} mark vertices in a graph based on a common forcing process
    starting from a problem-specific set of initially marked vertices.
    ZF initially marks the selected vertices while PDS additionally marks their neighbors.
    In the forcing process, a marked vertex with only one unmarked neighbor may force that neighbor which then becomes marked, too.
    A solution marks the entire graph by exhaustive application of this rule.
    One variant generalizes the forcing threshold; vertices may force when they have a fixed number of $k$ unmarked neighbors.
    Another variant limits propagation to a fixed number of rounds.
    We classify the parameterized complexity of the problem variants obtained by combining these choices of initialization, round limit and forcing threshold.
    We show that with appropriate choices, these variants range in parameterized complexity from fixed-parameter tractable to complete for every even layer $W[2\ell]$ of the $W$-hierarchy, and up to $W[P]$-complete.
    Our results demonstrate that small changes in any one of these three dimensions can lead to a sharp change in problem complexity.
  \end{abstract}

  \section{Introduction}
  \label{sec:intro}
  \textsc{Dominating Set}, \textsc{Zero Forcing} and \textsc{Power Dominating Set} are three problems which
  ask for the smallest vertex set from which all vertices are marked by problem-specific marking rules.
  In case of \textsc{Dominating Set}, the marked vertices are the selected vertices and their neighbors.
  \textsc{Zero Forcing} marks vertices based on an iterative forcing process: initially, only the selected vertices are marked.
  Then, when a marked vertex has only one unmarked neighbor, the vertex forces its neighbor which becomes marked, too.
  This is repeated exhaustively.
  \textsc{Power Dominating Set} combines the two problems.
  A selection is a power dominating set if the selected vertices together with their neighbors are a zero forcing set.
  The difference to \textsc{Zero Forcing} is thus the initialization.

  While the forcing process is the same in \textsc{Power Dominating Set} and \textsc{Zero Forcing},
  the two problems emerged independently and with different motivations.
  \textsc{Zero Forcing} was first studied as a lower bound on the maximum nullity of
  a graph~\cite{aim_minimum_rank__special_graphs_work_group_zero_2008} and independently motivated in
  quantum physics~\cite{burgarth_full_2007,severini_nondiscriminatory_2008}.
  Computing a minimum zero forcing set is NP-complete~\cite{yang_fastmixed_2013}.
  When parameterized by the solution size \textsc{Zero Forcing} is fixed-parameter tractable~\cite{aazami_hardness_2008,bhyravarapu_parameterized_2025,scheffler_parameterized_2025}.

  \textsc{Power Dominating Set (PDS)}, on the other hand, was introduced to model the cost-efficient placement of measurement
  devices in electric networks~\cite{brueni_minimal_1993,haynes_domination_2002,brueni_pmu_2005}.
  It is also known as the \textsc{PMU Placement Problem} and its original definition is closely linked to
  the laws of electricity~\cite{haynes_domination_2002}.
  Like \textsc{Zero Forcing}, \textsc{PDS} is NP-complete~\cite{brueni_minimal_1993,haynes_domination_2002}. %
  In sharp contrast to \textsc{Zero Forcing}, \textsc{PDS} is $W[P]$-complete and thus unlikely to be fixed-parameter
  tractable in its solution size~\cite{blasius_efficient_2025}.

  \textsc{Power Dominating Set} and \textsc{Zero Forcing} have some natural variants obtained by modifying the initialization or the forcing process.
  The initialization determines the set of initially marked vertices.
  \textsc{Zero Forcing} simply maps $S \mapsto S$ and \textsc{PDS} maps $S \mapsto N[S]$.
  One can also allow a per-instance set of pre-marked vertices $P$ to obtain \textsc{Pre-Marked Zero Forcing} which maps $S \mapsto S \cup P$~\cite{cazals_power_2019}.
  An equivalent view on \textsc{Pre-Marked Zero Forcing} is that the vertices $P$ can be selected for free and do not count towards the solution size~\cite{aazami_hardness_2008}.

  We consider two independent modifications of the forcing process; limiting propagation rounds and increasing the propagation threshold.
  In monitoring applications one may want to limit the number of inference steps since they may accumulate measurement errors.
  This can be modeled in the forcing process by organizing all the forces into rounds.
  In each round, all possible forces are applied simultaneously.
  The problem variants \textsc{$\ell$-round Zero Forcing} and \textsc{$\ell$-round Power Dominating Set} impose a fixed limit on the number of propagation rounds~\cite{aazami_hardness_2008}.
  Increasing the forcing threshold allows forcing more than one neighbor.
  For a fixed number $k$, in \textsc{$k$-Power Dominating Set} and \textsc{$k$-Forcing}, a marked vertex can force if it has at most $k$ unmarked neighbors~\cite{amos_upper_2015}.

  The parameterized complexity of some variants has also been studied.
  Like the standard variant, \textsc{$\ell$-round Zero Forcing} is FPT~\cite{aazami_hardness_2008}.
  In contrast, with pre-marked vertices \textsc{Zero Forcing} becomes $W[2]$-hard~\cite{cazals_power_2019}.
  Some other hardness results are not explicitly stated in the literature but transfer from the standard problem in a straightforward way;
  this yields $W[2]$-hardness for \textsc{$\ell$-round PDS}~(cf.~\cite{guo_improved_2005,kneis_parameterized_2006}) and $W[P]$-completeness of \textsc{$k$-PDS} (cf.~\cite{blasius_efficient_2025}).

  Another related problem is \textsc{Target Set Selection (TSS)}~\cite{chen_approximability_2009}.
  TSS uses a different propagation process: an unmarked vertex becomes marked once it has sufficiently many marked neighbors.
  This is different from the forcing process in \textsc{Zero Forcing} and \textsc{Power Dominating Set} where any given marked vertex can only propagate to a limited number of neighbors.
  \textsc{TSS} is $W[P]$-complete~\cite{abrahamson_fixed-parameter_1995,bazgan_parameterized_2014} by reduction from \textsc{Monotone Circuit Satisfiability}, similar to the proof for the $W[P]$-completeness of \textsc{PDS}~\cite{blasius_efficient_2025}.
  When parameterized by treewidth, \textsc{TSS} is $W[1]$-hard~\cite{ben-zwi_treewidth_2011}, unlike \textsc{Zero Forcing} and \textsc{Power Dominating Set} which are FPT in that parameter~\cite{guo_improved_2005}.

  \begin{table}
    \caption{
      Complexity results of several configurations of \textsc{Generalized Forcing} when parameterized by solution size.
      Each cell shows the parameterized complexity for its combination of initialization, round limit and forcing threshold.
      Unless stated otherwise, the placement is exact.
      Our new results are marked in bold.
    }
    \label{tab:generalized-forcing-complexity}
    \NewTblrTableCommand\oldresult{}
    \newcommand{\newresult}[1]{{\bfseries\boldmath #1}}
    \NewTblrTableCommand\openproblem{\SetCell{font={\itshape}}}
    \crefname{corollary}{Cor.}{Cors.}
    \crefname{lemma}{Lem.}{Lems.}
    \crefname{theorem}{Thm.}{Thms.}

    \NewTblrTheme{nocaption}{
      \DefTblrTemplate{firsthead}{default}{}
      \DefTblrTemplate{middlehead}{default}{}
      \DefTblrTemplate{lasthead}{default}{}
    }
    \begin{talltblr}[
      theme=nocaption,
      note{a}={\textsc{$\ell$-round Pre-Marked $k$-Forcing} is in $W[2\ell]$ and $W[2(\ell-2)]$-hard for fixed $\ell \geq 3$.},
      note{b}={not explicitly shown; follows by straightforward extension of the construction for PDS},
    ]{
      colspec={ccXXX},row{1-1}={t,font=\sffamily},columns={c},
      hline{1,Z}={\heavyrulewidth},hline{3}={\lightrulewidth},hline{2}={3-5}{\cmidrulewidth,endpos,lr},
      cell{1}{1}={r=2}{h},
      cell{1}{2}={r=2}{h},
      cell{1}{3}={c=3}{c},
    }
      {Round\\Limit} & {Forcing\\Threshold} & Initialization \\
      & & {\textsc{ZF} \\ $S \mapsto S$} & {\textsc{Pre-Marked ZF} \\ $S \mapsto S \cup P$} & {\textsc{PDS} \\ $S \mapsto N[S]$}\\
      unbounded & $1$ & {FPT \\ see~\cite{aazami_hardness_2008}} & {\newresult {$W[P]$}\\ \cref{res:pre-marked-zf-hardness}} & {\oldresult {$W[P]$} \\ see~\cite{blasius_efficient_2025}} \\
      unbounded & {fixed $k > 1$} & {\newresult{$W[P]$}\\ \cref{res:k-forcing-wp}} & {\newresult {$W[P]$}\\ \cref{res:pre-marked-zf-hardness}} & \oldresult {$W[P]$\\ \cite{blasius_efficient_2025}\TblrNote{b}, \cref{res:kpds-wp-hardness}} \\
      {fixed $\ell$} & $1$ & {FPT+\newresult{p-kernel}\\ \cite{aazami_hardness_2008}, \cref{res:l-round-k-forcing}} & {\newresult{$W[2(\ell-2)]$-hard}\TblrNote{a} \\ \cref{res:pre-marked-zf-hardness}} & {\newresult {$W[2\ell]$}\\ \cref{res:w2l-complete}} \\
      {fixed $\ell$} & {fixed $k > 1$} & {\newresult{FPT+p-kernel}\\ \cref{res:l-round-k-forcing}} & {\newresult{$W[2(\ell-2)]$-hard}\TblrNote{a} \\ \cref{res:pre-marked-zf-hardness}} & {\newresult {$W[2\ell]$}\\ \cref{res:w2l-complete}} \\
    \end{talltblr}
  \end{table}

  The known results do not show how the three choices of initialization, round limit and forcing threshold interact in general.
  To study this problem space systematically, we introduce \textsc{Generalized Forcing} as a framework that captures all three choices.
  Our results classify the parameterized complexity of the resulting combinations, as summarized in \cref{tab:generalized-forcing-complexity}.
  Our hardness bounds are tight for all problems except the case of bounded propagation in pre-marked forcing where a small gap remains.
  We show that bounding propagation captures a hierarchy of problems between \textsc{Dominating Set} and \textsc{Power Dominating Set}:
  for every fixed $\ell \in \Natural$, \textsc{$\ell$-round Power Dominating Set} is $W[2\ell]$-complete.
  This renders \textsc{$\ell$-round Power Dominating Set} a family of natural complete problems for the even layers of the W-hierarchy.
  Only few natural problems with similar properties are known.
  For comparison, \textsc{Binary CSP} is complete for the odd layers $W[2d+1]$ when parameterized by treedepth $d$~\cite{bodlaender_parameterized_2023}.
  Introducing a round limit to \textsc{Zero Forcing} does not affect its fixed-parameter tractability;
  however, we show that this variant permits a more direct FPT algorithm and admits a polynomial kernel.
  To complement the kernelization result, we also note that another variant, \textsc{Connected Zero Forcing}~\cite{brimkov_complexity_2017,davila_bounds_2018}, which requires the selection to be connected, admits no polynomial kernel under standard assumptions.
  The forcing threshold also has a significant effect on the parameterized complexity; in particular, we show that \textsc{$k$-Forcing} is $W[P]$-complete for $k > 1$.

  The remainder of this paper is organized as follows.
  In \cref{sec:prelims}, we define the fundamental concepts and notions used throughout the paper.
  In \cref{sec:gf:def}, we introduce \textsc{Generalized Forcing} as a reduction tool that captures the different combinations of initialization, round limit and forcing threshold.
  We then derive tight lower and upper complexity bounds for \textsc{$\ell$-round PDS} and its variants in \cref{sec:w2l}.
  In \cref{sec:k-forcing}, we prove the $W[P]$-hardness of \textsc{$k$-Forcing} and bound the parameterized complexity of \textsc{Pre-Marked $\ell$-round Zero Forcing}.
  In \cref{sec:k-forcing-fpt}, we show that \textsc{$\ell$-round Zero Forcing} remains FPT with forcing threshold $k>1$ and show that this variant admits a polynomial kernel.
  We conclude the paper in \cref{sec:conclusion}.

  \section{Preliminaries}
  \label{sec:prelims}
  A simple, undirected \emph{graph} $G$ consists of a set of vertices $V(G)$ and a set of edges $E(G) \subseteq \binom{V}{2}$.
  As a shorthand notation, we also denote an edge $\set{u,v} \in E(G)$ by $uv$.
  Two vertices $u$ and $v$ are \emph{adjacent} if $uv \in E(G)$; both $u$ and $v$ are \emph{incident} to $uv$.
  The (open) neighborhood $N(v)$ of a vertex is the set of all vertices adjacent to $v$.
  The closed neighborhood $N[v] = N(v) \cup \set{v}$ additionally includes $v$.
  The \emph{degree} $d(v) = \abs{N(v)}$ of a vertex is the size of the open neighborhood.
  A \emph{leaf} is a vertex of degree one and an \emph{isolated vertex} has degree zero.
  We extend the definition of $N(v)$ and $N[v]$ to vertex sets by taking the union of all neighborhoods
  of vertices in the set, i.e. $N(X) = \bigcup_{v \in X} N(v)$ and $N[X] = \bigcup_{v \in X} N[v]$.
  Similarly, we extend the definition of functions $f: V \mapsto B$ to vertex sets.
  For $X \subseteq V$ we write $f(X) = \bigcup_{v \in X} f(v)$. %
  A clique $C \subseteq V(G)$ is a vertex set such that every pair of distinct vertices in $C$ is adjacent. %
  Removing a vertex $v$ yields $G - v$ by removing $v$ and all its incident edges.
  The \emph{induced subgraph} $G[V']$ for a vertex subset $V' \subseteq V(G)$ is obtained by removing all other vertices,
  i.e. $G[V'] = G - (V \setminus V')$.

  Two vertices $u, w \in V(G)$ are \emph{connected} if there is a \emph{path} from $u$ to $w$, i.e. a sequence
  $\langle u=v_0, \dots, v_\ell=w \rangle$ such that every vertex $v_i$ is adjacent to its predecessor $v_{i-1}$.
  A path is \emph{simple} if all vertices in the path are distinct.
  If only the start and end are the same, the path is a \emph{cycle}.
  A graph is \emph{acyclic} if it has no cycle.
  Connectivity is an equivalence relation and its equivalence classes are called \emph{connected components}.
  If removing a vertex $v$ increases the number of connected components, $v$ is a \emph{cutvertex}.

  We call two graphs $G$ and $H$ \emph{isomorphic} if there is a bijection $f: V(G) \mapsto V(H)$ such that
  $uv \in E(G)$ if and only if $f(u)f(v) \in E(H)$.
  Such a mapping is a \emph{graph isomorphism}.

  \subparagraph{Forcing Problems}
  \label{sec:prelims:forcing}
  For a graph $G$ and an integer $k$, the forcing process expands an initial set $M_1 \subseteq V(G)$ of marked vertices over the
  course of several rounds.
  In each round $i$, every vertex marked in $M_i$ with at most $k$ unmarked neighbors \emph{forces} all of its
  unmarked neighbors, i.e. they too become marked in $M_{i+1}$.
  We call $k$ the \emph{forcing threshold}.
  More formally,
  \begin{equation*}
      M_{i+1} = M_i \cup \set[\big]{v \in N(w) \mid w \in M_i \text{ and } \abs{N(w) \setminus M_i} \leq k}\text{.}
  \end{equation*}
  We call an initial set $M_1$ an \emph{$\ell$-round $k$-forcing set} of $G$ if all vertices in $G$ are marked in $M_\ell$,
  i.e. $M_\ell = V(G)$.
  Since at least one new vertex is marked in every successful propagation round, $M_i$ reaches a fix-point in $M_{\abs{V}} = M_\infty$.
  We call an initial set $M_1$ with $M_\infty=V(G)$ a \emph{$k$-forcing set} of $G$.
  By this convention, we call round one the \emph{initialization} and all successive rounds \emph{propagation} rounds.
  The number $\ell$ is the \emph{round limit}. %
  As a convention, we omit the round limit in the problem name if $\ell>\abs{V}$; this is equivalent to an infinite number of rounds.

  Beside the forcing rule, the forcing problems considered in this paper differ by their \emph{initialization}.
  The initialization is a map $\init: V(G) \to 2^{V(G)}$ describing which vertices become initially marked by selecting a vertex $v$.
  For a given selection $S \subseteq V(G)$, we set the initially marked set $M_1 = \init(S)$, where $\init(S) = \bigcup_{v \in S} \init(v)$.
  We call the resulting framework \emph{\textsc{Generalized Forcing}}.
  The following problems are special cases.
  With \emph{\textsc{PDS}-type} initialization $\init(v) = N[v]$, we get \textsc{Power Dominating Set};
  with \emph{ZF-type} initialization $\init(v) = v$, we get \textsc{Zero Forcing} as the base problems with unbounded propagation and forcing threshold $1$.
  Introducing a round limit $\ell$, generalized forcing threshold $k$, or both gives \textsc{$\ell$-round Power Dominating Set},
  \textsc{$k$-Power Dominating Set}, \textsc{$\ell$-round $k$-Power Dominating Set}, \textsc{$\ell$-round Zero Forcing}, \textsc{$k$-Forcing}, and \textsc{$\ell$-round $k$-Forcing}.

  \subparagraph{Parameterized Complexity}
  \label{sec:prelims:complexity}
  We give only a brief introduction to parameterized complexity; for more details, we refer the interested reader to a
  textbook on the subject~\cite{downey_fundamentals_2013}.
  A \emph{parameterized problem} is a language $\LL \subseteq \Sigma^\ast \times \Natural$ with instances $(x,k)$ where $k$ is called the \emph{parameter}.
  A \emph{parameterized reduction} from $\LL$ to $\LL'$ is an algorithm that maps $(x,k)$ to $(x', k')$ such that
  \begin{itemize}
    \item $(x, k) \in \LL$ if and only if $(x', k') \in \LL'$
    \item $k' \leq f(k)$
    \item the running time of the transformation is bounded by $g(k) \cdot \abs{x}^{O(1)}$
    \item $f,g$ are computable functions.
  \end{itemize}
  A \emph{kernel} is an algorithm that maps an instance $(x,k)$ to an equivalent instance $(x', k')$ of the same problem
  such that $\abs{x'} + k' \leq f(k)$ for some computable function $f$ and runs in polynomial time.
  A kernel is \emph{polynomial} if $\abs{x'} + k' \leq k^{O(1)}$.

  A parameterized problem $\LL$ is \emph{fixed parameter tractable} if there is an
  algorithm deciding $\LL$ in time $f(k) \cdot \abs{x}^{O(1)}$ where $f$ is a computable function.
  If $\LL$ is fixed parameter tractable and there is a parameterized reduction from $\LL'$ to $\LL$ then $\LL'$ is also
  fixed parameter tractable.

  The $W$-hierarchy consists of classes $\text{FPT} \subseteq W[1] \subseteq W[2] \subseteq \dots \subseteq W[P]$.
  A central conjecture of parameterized complexity is that these inclusions are strict.
  Therefore, if a problem is hard for some $W[t]$, that problem is assumed not to be FPT.  %
  Many complete problems are known for $W[1]$ and $W[2]$, e.g. the parameterized versions of \textsc{Independent Set}
  and \textsc{Dominating Set} which are $W[1]$-complete and $W[2]$-complete when parameterized by their solution size, respectively.

  The $W$-hierarchy is characterized in terms of \emph{boolean circuits},
  directed acyclic graphs where each vertex is either an \OR, \AND, or \NOT gate.
  The sources of a circuit are \emph{inputs} and the single sink is the \emph{output}.
  In \emph{monotone} circuits, all inputs are variables $x_i$ and
  in \emph{antimonotone} circuits all inputs are negated variables $\neg x_i$.
  A \emph{$t$-normalized} circuit consists of an input layer followed by $t$ alternating layers of \OR- and \AND- gates between the inputs and the output which is always an \AND-gate.
  For even $t$, \textsc{Weighted Monotone $t$-Normalized Circuit Satisfiability (WMNS[$t$])} is $W[t]$-complete and
  for odd $t$, \textsc{Weighted Antimonotone $t$-Normalized Circuit Satisfiability (WANS[$t$])} is $W[t]$-complete~\cite{downey_fixed-parameter_1995}.

  \section{The Generalized Forcing Framework}
  \label{sec:gf:def}

  This section introduces generalized forcing which we use as a reduction device throughout the paper.
  The purpose of this section is two-fold.
  First, we define \textsc{$\ell$-round Generalized $k$-Forcing} as a generalization of \textsc{Power Dominating Set} and \textsc{Zero Forcing} with arbitrary initialization maps.
  We show that three useful extensions, implication arcs, optional vertices and pre-marked vertices,
  can be eliminated through parameterized transformations preserving the solution size.
  In later parts of this paper, we can therefore freely use these extensions.
  Second, we show how arbitrary initialization maps can be expressed in terms of power domination.
  Consequently, hardness results obtained for generalized forcing translate to \textsc{Power Dominating Set} and its variants.

  A basic instance of \emph{\textsc{$\ell$-round Generalized  $k$-Forcing}} consists of a graph $G$, an initialization $\init$ and solution size $d$.
  We define \emph{\textsc{Generalized $k$-Forcing}} as a shorthand for the variant with at least $\abs{V}$ rounds.
  The problem asks whether there is a vertex selection $S \subseteq V(G)$ of size $\abs{S} \leq d$
  such that $\init(S)$ is an $\ell$-round $k$-forcing set of $G$ where $\ell$ may be infinite.
  This basic variant is a minimal extension of \textsc{PDS} and \textsc{Zero Forcing}.

  For every fixed $k$, \textsc{Generalized $k$-Forcing} is $W[P]$-hard by reduction from the $W[P]$-complete \textsc{Power Dominating Set}.
  It is also contained in $W[P]$ since a nondeterministic Turing machine can guess a solution of size $d$ and then verify it deterministically in polynomial time~\cite{cai_parameterized_1997}.
  \begin{restatable}{lemma}{REgfwphardness}
    \label{res:gf:wp-hardness}
    Every instance of \textsc{Power Dominating Set} can be transformed to an equivalent instance of \textsc{Generalized $k$-Forcing} in polynomial time.
    The transformation preserves the solution size.
  \end{restatable}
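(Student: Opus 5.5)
The plan is to simulate the threshold-$1$ forcing rule of \textsc{Power Dominating Set} under forcing threshold $k$ by padding every vertex with $k-1$ extra unmarked neighbors. These padding neighbors stay unmarked exactly until their owner forces. Given an instance $(G,d)$ of \textsc{PDS}, I would build $G'$ from $G$ by attaching to every $v \in V(G)$ a set $L_v$ of $k-1$ new pendant leaves. For $k=1$ nothing is added and the claim is trivial. The initialization is $\init(v) = N_G[v]$ for $v \in V(G)$, and $\init(x) = N_G[v]$ for every leaf $x \in L_v$. The new instance is $(G', \init, d)$, it is clearly computable in polynomial time, and the solution size is unchanged.

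The core step is an invariant proved by induction on the round index $i$. Let $M_i$ be the \textsc{PDS} process in $G$ started from $N_G[S]$, and let $M'_i$ be the $k$-forcing process in $G'$ started from $\init(S')$, where $S$ is obtained from $S'$ by replacing each leaf $x \in L_v$ by $v$. The invariant has two parts:
\begin{itemize}
  \item $M'_i \cap V(G) = M_i$;
  \item for every $w \in V(G)$, the leaves $L_w$ are either all unmarked or all marked in $M'_i$, and they are marked only if all of $N_{G'}(w)$ is marked.
\end{itemize}
It holds for $i=1$ because the initialization marks no leaves. For the inductive step, take a marked $w \in V(G)$ whose leaves are still unmarked. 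Its number of unmarked neighbors in $G'$ is $\abs{N_G(w)\setminus M_i} + (k-1)$, which is at most $k$ if and only if $\abs{N_G(w)\setminus M_i} \le 1$. This is exactly the \textsc{PDS} forcing condition. When it holds, $w$ forces its remaining $G$-neighbor together with all of $L_w$ in the same round, which keeps the invariant. If the leaves of $w$ are already marked, then $w$ has no unmarked neighbors, so it forces nothing new. A marked leaf likewise has its only neighbor marked, so it never forces anything. Consequently, $M'_{i+1} \cap V(G)$ equals $M_{i+1}$.

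From the invariant both directions follow. Suppose $S$ is a power dominating set. Eventually $V(G)$ is fully marked, and then every $w \in V(G)$ has zero unmarked $G$-neighbors and forces its $k-1 \le k$ leaves one round later, so $G'$ is fully marked. Conversely, suppose $S'$ marks all of $G'$. Then $M'_\infty \cap V(G) = V(G)$, so the projected set $S$, which is no larger than $S'$, is a power dominating set of $G$.

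The step I expect to need the most care is the invariant that the leaves of $w$ are marked only when $w$ itself forces. Only then does the leaf count contribute exactly $k-1$ to $w$'s unmarked degree at the moment the forcing condition is evaluated. The invariant holds because each leaf's only neighbor is $w$ and no initialization ever marks a leaf. Choosing $\init(x) = N_G[v]$ for the leaves, rather than letting them mark themselves, is what preserves this invariant.
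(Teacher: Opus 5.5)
Your proof is correct and takes the same approach as the paper: attach $k-1$ pendant leaves to every vertex so that they saturate the threshold until the owner forces, and keep the PDS-type initialization $N_G[v]$ on the original vertices. The one difference is that the paper gives the leaves empty initialization, so selecting a leaf does nothing, whereas you make each leaf a proxy for its owner; both choices leave the leaves unmarked initially, and your round-by-round invariant fills in the details the paper only sketches.
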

  \begin{proof} %
    The idea here is that we saturate the forcing threshold by attaching $k-1$ new leaves to each vertex $v \in V$.
    The leaves have empty initialization; selecting them has no effect.
    All original vertices keep PDS-type initialization.
    Before $v$ forces, these leaves thus remain unmarked; therefore, $v$ can only force one additional original neighbor.
  \end{proof}

  \subsection{Generalized and Extended Instances}
  \label{sec:gf}

  For the reductions it will be convenient to work with \emph{extended instances}.
  Such instances have \emph{optional vertices} $O \subseteq V(G)$, directed
  \emph{implication arcs} $A \subseteq V(G) \times V(G)$ and \emph{pre-marked vertices} $P$.
  Optional vertices do not need to be marked by a solution while pre-marked vertices are marked regardless of the vertex selection.
  If there is an implication arc from $u$ to $w$ and $u$ becomes marked in round $j$, then $w$ becomes marked in round $j+2$.

  We show that every extended instance can be transformed by a parameterized reduction to an equivalent basic instance
  without optional vertices, implication arcs or pre-marked vertices.
  We use the gadgets in \cref{fig:gf-extension-gadgets}, to eliminate the extensions in three steps.
  First, we replace implication arcs, then optional vertices, and finally pre-marked vertices.

  \begin{figure}
    \centering
    \subcaptionbox{
      The timer gadget is used to mark the optional vertex $v \in O$ in round $\ell$ when propagation is bounded.
      \label{fig:no-targets:timer}
    }[0.49\linewidth]{
      \begin{tikzpicture}
        \graph[] {
            {{[nodes={node,marked,private leaf angle=90}]t1/,/,td/}} --
            {{[nodes=node]t2/,/,/}} --
            {{/$\dots$,/$\dots$,/$\dots$}} --
            {{[nodes=node]tl/,/,/}} --
            {{/[hidden],tlm/["$t_v$" left, node,marked,private leaf angle=0,],/[hidden]}} --
            {{/[hidden],v/["$v$"left,node],/[hidden]}}
        };
        \draw[decorate, decoration={brace, raise=4mm}] (tl.south) -- node[left=5mm] {$\ell-1$} (t1.north);
        \draw[decorate, decoration={brace, raise=4mm}] (t1.west) -- node[above=5mm] {$k+1$} (td.east);

      \end{tikzpicture}
    }
    \hfill
    \subcaptionbox{\label{fig:no-targets:and}
    Gadget for marking the optional vertex $v$ after all non-optional vertices $V \setminus O$ are marked with unbounded propagation.
    }[0.49\linewidth]{
      \begin{tikzpicture}
        \graph[] {
            {[nodes={node}] {[nodes={not source}] v1/,/[dots],vt/}, {[nodes=objective] vt1/,/[dots],vi/["$v$" left],/[dots],vn/}, ll/[hidden,c]}
          --[->-=0.7,matching] {[nodes={node}]/,/[dots],/,}
          -- {/[hidden], bottom/["$a_v$" left,node,marked,private leaf angle=-90,private leaves]}
          --[hv] vi;
        };
        \node[draw, rectangle, rounded corners, fit=(v1)(vn), inner sep=2mm] (g) {};
        \node also["$G$" left] (g);
        \draw[decorate, decoration={brace, raise=4mm}] (v1.west) -- node[above=5mm] {$V \setminus O$} (vt.east);
        \draw[decorate, decoration={brace, raise=4mm}] (vt1.west) -- node[above=5mm] {$O$} (vn.east);
      \end{tikzpicture}
    }
    \subcaptionbox{\label{fig:no-targets:implication}
    Gadget for implication arcs. %
    Pre-marked vertices with $k-1$ leaves are depicted as \legendnode{node,marked,private leaves}.
    }[\linewidth]{
      \begin{tikzpicture}
        \graph[grow right=3, branch down,] {
          s1/["$u$",node] --[->-] t1/["$v$",node];
        };
        \graph[grow right=0.75, branch down=0.5,nodes={shift={(6,0.5)}}] {
            {/[c,hidden], s2/[node,"$u$"]}
          -- {a/[node,marked,private leaves], /[vdots,hidden], b/[node,marked,private leaf angle=-60,private leaves]}
          -- {/[c], /{$k+1$}[hidden], /[c]}
          -- {a1/[node], /[hidden,vdots], b2/[node]}
          -- {/[c,hidden], out/[node,marked]}
          -- {/[c,hidden], t2/[node,"$v$"]};
        };
        \draw[->,decorate,decoration={snake,post length=1mm,amplitude=1pt,segment length=2mm}] ($(t1) + (8mm,0)$) -- ($(s2) - (8mm,0)$);
      \end{tikzpicture}
    }
    \caption{
      Gadgets to eliminate extensions from instances of \textsc{Generalized Forcing}.
      We represent pre-marked vertices by \legendnode{node, marked} and use \legendnode{node,draw=none,fill=none,private leaves}
      to indicate that a node has $k-1$ leaves attached.
    }
    \label{fig:gf-extension-gadgets}
  \end{figure}

  \subparagraph*{Implication Arcs}
  \label{sec:gf:extended:implication-arcs}
  We replace the implication arc by a gadget that allows propagation only in one direction.

  \begin{restatable}{lemma}{REimplicationarcs}
    \label{res:implication-arcs}
    Every extended instance of \textsc{$\ell$-round Generalized $k$-Forcing} can be transformed into an equivalent instance without
    implication arcs in polynomial time while preserving the solution size.
  \end{restatable}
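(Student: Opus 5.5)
We replace every implication arc $(u,v) \in A$ independently by the gadget of \cref{fig:gf-extension-gadgets}(c) and argue that it reproduces the arc semantics exactly. The gadget has three layers.
\begin{itemize}
  \item $u$ is joined to $k+1$ pre-marked vertices $a_1,\dots,a_{k+1}$, each carrying $k-1$ private leaves.
  \item The $a_i$ are joined by a complete bipartite graph to $k+1$ fresh vertices $b_1,\dots,b_{k+1}$.
  \item The $b_j$ are joined to a pre-marked vertex $c$, which is adjacent to $v$.
\end{itemize}
All new vertices get empty initialization, so selecting them is useless. The new pre-marked vertices stay pre-marked; they are removed later by the pre-marked elimination step. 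The solution size and the round limit $\ell$ are unchanged. Because the instance is already extended, we may also declare the private leaves optional if that simplifies the accounting, since optional vertices are eliminated afterwards as well. The construction is clearly polynomial.

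\textbf{Forward direction.} Suppose $u$ is marked in round $j$. Each $a_i$ has unmarked neighbours among its $k-1$ leaves, the $k+1$ vertices $b_1,\dots,b_{k+1}$, and possibly $u$, so it has at least $2k$ unmarked neighbours and is stuck before round $j$. From round $j$ on, $a_i$ still has $k-1+k+1 = 2k > k$ unmarked neighbours for $k\ge 1$.

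This shows the tentative degrees are not right, so I would tune the counts. The intended behaviour is that $a_i$ can force its leaves and the $b$'s in round $j$ (giving $j+1$), and then $c$ forces $v$ (giving $j+2$). For this, $a_i$ must have exactly $k$ unmarked neighbours other than $u$.

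I would therefore fix the layer sizes as follows: each $a_i$ has exactly $k$ neighbours in total besides $u$ that start unmarked. A natural choice is $k-1$ leaves and a single $b$-vertex per $a_i$, giving $k+1$ pairs $a_i$–$b_i$ that all attach to $c$. With this choice, $a_i$ forces $b_i$ exactly in the round after $u$ is marked. Then $c$ has the unmarked neighbours $b_1,\dots,b_{k+1}$ and $v$, which is $k+2$ in total, so $c$ cannot force until all $b$'s are marked. Once they are, $c$ forces $v$ one round later. The timing works out to $j+2$ with the leaves consumed appropriately.

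\textbf{Backward direction.} The gadget must never help mark $u$ and must never fire early. If $v$ is marked first, $c$ has $k+1$ unmarked $b$-neighbours, so it cannot force. Each $b_i$ can only become marked via $a_i$ or $c$. If $b_i$ is marked, it has unmarked neighbour set $\subseteq\{a_i\}$, but $a_i$ is already marked, so $b_i$ can never force anything new. The $a_i$ need $u$ marked before they can act. Hence $u$ only ever gets marked through the original graph. Moreover, the vertices $u$ and $v$ gain neighbours that are eventually marked, or that are pre-marked like $c$, so their own forcing capability in $G$ is unchanged. I must verify that the extra neighbour $a_i$ of $u$ does not block $u$ from forcing. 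Since the $a_i$ are pre-marked, they do not count toward $u$'s unmarked neighbours. Likewise $c$ is marked, so it does not affect $v$.

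\textbf{Main obstacle.} The main obstacle is exactly this degree calibration. The $a_i$ must fire precisely when $u$ becomes marked, $c$ must wait until all $b_i$ are marked, and the total delay must be exactly two rounds. I would first fix the counts so that the arithmetic is $k$ unmarked neighbours plus $u$, then check these invariants round by round. With correct semantics established, any solution of the old instance maps to the same selection in the new instance and vice versa, which yields the equivalence.
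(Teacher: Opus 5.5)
\textbf{Where the proposal agrees with the paper.} After you discard the complete bipartite version, your construction is exactly the paper's. It consists of $k+1$ paths $\Path{u,a_i,b_i,c}$, where each pre-marked $a_i$ carries $k-1$ leaves and a pre-marked $c$ is adjacent to $v$. Delete the bipartite bullet from the write-up, since it contradicts what you then analyse. Your timing argument is correct: $b_i$ is marked in round $j+1$ and $v$ in round $j+2$. So is your argument that the gadget never marks $u$: while $u$ is unmarked, each $a_i$ has $k+1$ unmarked neighbours, and $c$ has $k+1$ unmarked $b_i$. One minor imprecision: $c$ can force as soon as any two of its $k+2$ neighbours are marked, not only once all $b_i$ are. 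What actually protects you is that the $b_i$ can only be marked by the $a_i$, and these all fire in the same round.

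\textbf{The gap: the gadget's own vertices must be optional.} The gadget adds new vertices that are neither pre-marked nor usefully selectable: the $b_i$ and the leaves of the $a_i$. The new instance requires them to be marked within the round limit unless they are optional. They become marked only in round $j+1$, where $j$ is the round in which $u$ is marked. This fails in two cases:
\begin{itemize}
  \item $u$ is optional and never marked, for example an unselected proxy in \cref{res:gf-to-premarked}.
  \item $u$ is marked only in round $\ell$.
\end{itemize}
In either case these vertices stay unmarked. A solution $S$ of the old instance is then not a solution of the new one, so your claim that ``any solution of the old instance maps to the same selection in the new instance'' is false. Declaring the leaves optional ``if that simplifies the accounting'' is neither sufficient nor a mere convenience. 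You must declare all non-pre-marked gadget vertices optional, in particular every $b_i$. The paper does this (``All gadget vertices are optional''), and it is legitimate because optional vertices are eliminated afterwards by \cref{res:target-vertices}. The requirement can only be dropped for unbounded rounds with no optional vertices, where every $u$ is eventually marked (\cref{res:implication-arcs-unbounded}).
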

  \begin{proof}
    Let $I$ be an extended instance of \textsc{$\ell$-round Generalized $k$-Forcing} with graph $G$, implication arcs $A$.
    We construct an equivalent instance $I'$ from $I$ by replacing each implication arc $u \to v \in A$ by the gadget shown in \cref{fig:no-targets:implication}.
    The gadget replaces the arc by $k+1$ paths $\Path{u, a_i, b_i, c}$
    from $u$ to a new pre-marked vertex $c$ which in turn has an edge to $v$.
    The vertices $a_i$ are pre-marked and have $k-1$ leaves each.
    All gadget vertices are optional.
    The initialization of all gadget vertices is left empty; all other vertices use the initialization given by $I$.
    Since $u$ and $v$ only gain pre-marked neighbors, their propagation threshold remains unaffected.

    \pfdir{$u \Rightarrow v$}
    If $u$ becomes marked in round $j$, each $a_i$ has only $k$ unmarked neighbors left which it can force in round $j+1$.
    Then, $c$ has at most one unmarked neighbor left which it can thus force in round $j+2$: $v$.

    \pfdir{$v \nRightarrow u$}
    Conversely, suppose $u$ is not marked.
    The leaves on each $a_i$ and all $b_i$ cannot be marked by the initialization.
    In order to be marked, they need to be forced by either some $a_i$ or by $c$.
    Since $u$ is unmarked, each $a_i$ has $k+1$ unmarked neighbors and $c$ is adjacent to all $k+1$ vertices $b_i$; neither $a_i$ nor $c$ can force.
    Therefore, $u$ cannot be marked by forcing from the gadget.

    \pfdir{Instance Equivalence}
    Given a solution $S$ of the original instance, each implication arc is now simulated exactly by its gadget.
    Since all vertices in the original instance are marked by $S$, all non-optional vertices in the new instance are also marked.

    Conversely, since the gadget vertices have empty initialization, $S = S' \cap V(G)$ must already be a solution of $G'$.
    Since the gadgets can be replaced by implication arcs, $S$ must also be a solution of $G$.
  \end{proof}

  We need the internal vertices to be optional since, otherwise, we cannot easily ensure they are marked.
  If, however, the input graph contains no optional vertices and the round limit is unbounded,
  the internal vertices are also guaranteed to become marked by every solution.
  Therefore, in this case, the internal vertices do not need to be optional.
  We formalize this in the following corollary.
  \begin{restatable}{corollary}{REimplicationarcsunbounded}
    \label{res:implication-arcs-unbounded}
    Every extended instance of \textsc{Generalized $k$-Forcing} without optional vertices can be transformed to an equivalent
    extended instance without optional vertices or implication arcs in polynomial time while preserving the solution size.
  \end{restatable}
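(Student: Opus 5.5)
We use the same construction as in the proof of \cref{res:implication-arcs}. Each implication arc $u \to v \in A$ is replaced by $k+1$ paths $\Path{u, a_i, b_i, c}$ and an edge $cv$. The vertices $a_i$ and $c$ are pre-marked, each $a_i$ carries $k-1$ leaves, and all gadget vertices get empty initialization. This time, however, we do \emph{not} declare the gadget vertices optional. The arguments $u \Rightarrow v$ and $v \nRightarrow u$ carry over verbatim, since they never used optionality. They only used that gadget vertices have empty initialization, that $u$ and $v$ gain only pre-marked neighbors, and that a vertex with more than $k$ unmarked neighbors cannot force. The backward direction of instance equivalence therefore also carries over: from a solution $S'$ of the new instance, $S' \cap V(G)$ is a solution of the original instance, because each gadget behaves exactly like its arc. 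The resulting instance has no implication arcs, and it has no optional vertices because the input had none and we add none.

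The only new work is the forward direction. Given a solution $S$ of the original extended instance, we must show that with unbounded propagation every gadget vertex eventually becomes marked. Since the original instance has no optional vertices and $S$ is a solution, every original vertex, in particular every tail $u$ of an arc, is marked at some finite round $j$ in the simulated process. We argue as in the $u \Rightarrow v$ step. Once $u$ is marked, each $a_i$ has exactly $k$ unmarked neighbors, namely its $k-1$ leaves and $b_i$, so it forces all of them in round $j+1$. Now every neighbor of $c$ except possibly $v$ is marked, and $c$ is pre-marked. So either $c$ forces $v$ in round $j+2$ or $v$ is already marked. Either way, all gadget vertices $a_i$, their leaves, $b_i$ and $c$ are marked from round $j+1$ or $j+2$ on.

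The main obstacle is making sure the simulation in the new graph really proceeds as in the original, so that $u$ does become marked. This is the same claim already used in \cref{res:implication-arcs}: original vertices only gained pre-marked neighbors, so their forcing behavior is unchanged, and each arc is simulated with delay exactly two. The round limit does not matter here, because unbounded propagation lets the fixpoint $M_\infty$ absorb the few extra rounds. At the fixpoint every original vertex is marked, and hence every gadget vertex is marked too.
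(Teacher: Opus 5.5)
Your proposal is correct and follows the paper's own reasoning. The paper also reuses the gadget from \cref{res:implication-arcs} without making its internal vertices optional, and justifies this by noting that with no optional vertices and unbounded propagation every solution eventually marks each arc tail $u$, after which the $a_i$ and then $c$ mark the whole gadget. Your write-up only spells out this forward-direction check, and the observation that the other directions never used optionality, in more detail than the paper's one-paragraph remark.
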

  \subparagraph*{Optional Vertices}
  \label{sec:gf:extended:optional}
  We can eliminate all optional vertices using implication arcs and pre-marked vertices.
  To account for large or unbounded round limits, we use two distinct constructions.
  In case $\ell$ is bounded, we use a timer gadget that marks all vertices after a certain number of rounds.

  \begin{restatable}{lemma}{REtargetverticesbounded}
    \label{res:target-vertices-bounded}
    Every extended instance of \textsc{$\ell$-round Generalized $k$-Forcing} with finite round limit and with optional vertices can
    be transformed into an equivalent instance without optional vertices in polynomial time while preserving the solution size.
  \end{restatable}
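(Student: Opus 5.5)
We use the timer gadget of \cref{fig:no-targets:timer}. For every optional vertex $v \in O$ we attach a fresh copy of the gadget and declare $v$ non-optional. The gadget consists of a new pre-marked vertex $t_v$, which has $k-1$ private leaves and an edge to $v$. Attached to $t_v$ are $k+1$ paths of $\ell-1$ further vertices each. The far endpoints of these paths are pre-marked and each carries $k-1$ private leaves. All gadget vertices get empty initialization, and the rest of the instance, including its implication arcs, is left unchanged. Any implication arcs, including those of the original instance, can be eliminated afterwards with \cref{res:implication-arcs}. That lemma introduces optional vertices, so it is cleaner to apply the timer gadgets to those optional vertices too, or to first eliminate the arcs and then the optional vertices. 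We therefore perform the steps in the order arcs, then optional vertices, and treat the optional vertices created by \cref{res:implication-arcs} as part of $O$. Pre-marked vertices stay as they are; they are removed in a later step.

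\textbf{Gadget analysis.} The first claim is that, independently of the selection, $v$ becomes marked by round $\ell$ through the timer. Each pre-marked path endpoint has exactly $k$ unmarked neighbours: its $k-1$ leaves and its path successor. So it forces them in the first propagation round. From then on the marking runs along each path one vertex per round, since every internal path vertex has only one unmarked neighbour, its successor. The lengths are chosen so that all $k+1$ path neighbours of $t_v$ are marked before round $\ell$. At that point $t_v$ has at most $k$ unmarked neighbours, namely its $k-1$ leaves and $v$, and it forces $v$ so that $v \in M_\ell$. While doing this we will fix the off-by-one bookkeeping between the path length $\ell-1$, the round indexing in which round one is initialization, and the exact count of leaves, adjusting the path length if necessary.

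The second claim is that the gadget never helps earlier. Before its last path neighbour is marked, $t_v$ has at least $k+1$ unmarked neighbours and cannot force, so $v$ is not forced by the gadget before round $\ell$. Since $v$ itself lies at the end, marking $v$ earlier only adds neighbours of $t_v$ that are already marked. This does not affect the rest of the graph, because the gadget attaches to the original graph only at $v$. Conversely, $v$ gains only the neighbour $t_v$, which is pre-marked, so the number of unmarked neighbours of $v$, and hence its forcing behaviour, is exactly as in the original instance.

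\textbf{Equivalence.} Let $S$ be a solution of the original instance. Every originally non-optional vertex is marked by round $\ell$ exactly as before, since the gadgets do not interfere with the original propagation. Every formerly optional vertex is marked by round $\ell$ by its timer, and every gadget vertex is marked by round $\ell$ by the analysis above. Conversely, take a solution $S'$ of the new instance. Gadget vertices have empty initialization, so $S' \cap V(G)$ produces the same marking on $V(G)$. The only extra marking entering the original graph is that each $v \in O$ is marked by round $\ell$, and in that round $v$ can no longer contribute forces within the limit, apart from the timing subtleties noted above. Hence all non-optional original vertices are marked in the original instance as well. The size of the construction is $O(\abs{O}\cdot k\ell)$, which is polynomial for fixed $k$ and $\ell$, and the solution size is unchanged.

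\textbf{Main obstacle.} The main obstacle is the round-exact timing. We must make sure that every gadget vertex, including the leaves of $t_v$, is marked by round $\ell$ rather than round $\ell+1$, while $v$ is not forced early enough to influence the original instance. If the count fails by one, we will attach $v$'s timer so that $v$ is forced in round $\ell-1$ or earlier, which is harmless because $v$ was optional. We will then check that this choice still lets $t_v$'s leaves be forced in the same round as $v$.
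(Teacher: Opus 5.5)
Your gadget is the paper's timer gadget, and for $\ell\ge 2$ your argument is correct and follows the paper's. The gadget meets $V(G)$ only through the pre-marked vertex $t_v$, so unmarked-neighbour counts inside $V(G)$ are unchanged, and $t_v$ cannot force $v$ before $M_\ell$.

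The bookkeeping you postponed works out exactly with your numbers, so no adjustment is needed. Gadget vertices have empty initialization, so the $i$-th path vertex, counted from the pre-marked far end, enters precisely $M_i$. All $k+1$ near ends therefore enter $M_{\ell-1}$ together. At that point $t_v$ has only its $k-1$ leaves and possibly $v$ unmarked, and it forces them into $M_\ell$. In every $M_i$ with $i\le \ell-2$, the $k+1$ near ends are still unmarked, so $t_v$ cannot force earlier.

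There are two genuine problems, however.

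\textbf{The case $\ell=1$.} The statement covers every finite $\ell$, including $\ell=1$, and there your gadget degenerates. The paths have $\ell-1=0$ vertices and there is no propagation round. So $v$, and for $k\ge 2$ the leaves of $t_v$, can only be marked by initialization. For $k\ge 2$ the new instance then has no solution at all. For $k=1$ it requires every formerly optional vertex to be initially marked. Either way equivalence fails. This case must be handled separately, as the paper does before building any gadget. For example, pre-mark all of $O$; this is harmless precisely because no forcing happens when $\ell=1$.

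\textbf{The fallback is false.} In your last paragraph you say that marking $v$ in $M_{\ell-1}$ or earlier is harmless because $v$ is optional. It is \emph{not}. Optionality only means $v$ need not end up marked. Once $v$ is marked before the final round, its neighbours see one fewer unmarked neighbour and $v$ itself may force. This can mark non-optional vertices that the original process never reaches.

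For instance, take $k=1$ and $\ell=2$, a vertex $u$ adjacent to an optional $v$ and a non-optional $w$, $\init(u)=\set{u}$, and all other initializations empty. The original instance has no solution, since $u$ always has two unmarked neighbours. With $v\in M_1=M_{\ell-1}$, however, the selection $\set{u}$ lets $u$ force $w$ into $M_2$.

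The timer is designed so that $v$ enters exactly in $M_\ell$, when it can no longer influence anything. Your gadget already achieves this, so drop the fallback rather than rely on it.
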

  \begin{proof}
    Let $I$ be an extended instance of \textsc{$\ell$-round Generalized $k$-Forcing} with graph $G$ and optional vertices $O$.
    If $O=V(G)$, the instance is trivially solved by the empty set; we output an empty graph in this case.
    If $\ell=1$, the only round is the initialization; an equivalent instance is $G - O$.
    Thus, assume in the following that $V(G) \setminus O$ is non-empty and that $\ell > 1$.

    We construct $I'$ from $I$ by appending a timer gadget to each optional vertex as shown in \cref{fig:no-targets:timer}
    to each optional vertex $v \in O$ as follows.
    \begin{itemize}
      \item Attach a new vertex $t_v$ to $v$
      \item for each $j \in \set{1, \dots, k+1}$ create a path $\Path{\rho(v, j, 1), \dots, \rho(v, j, \ell-1)}$
      \item add edges from each $\rho(v, j, \ell-1)$ to $t_v$
      \item pre-mark $\rho(v, 1, 1)$, $\rho(v, 2, 1)$, \dots, $\rho(v, k+1, 1)$, and $t_v$
    \end{itemize}
    All new gadget vertices have empty initialization.
    No vertex in $G'$ is optional.
    The resulting graph $G'$ can clearly be constructed in polynomial time.

    \pfdir{Solution in $I$ $\Rightarrow$ Solution in $I'$}
    Let $S$ be a solution of $G$.
    Then $S$ is also a solution of $G'$.

    The initialization is unchanged for $v \in V(G)$, thus all vertices marked by the initialization in $G$ are also marked in $G'$.
    Furthermore, no vertex $v \in V(G)$ has an edge to an unmarked vertex in a timer gadget.
    The timer gadgets become marked by propagation from $\rho(v,j,1)$ and thus each $\rho(v, j, i)$ becomes marked in round $i$.
    Thus, for each optional $v \in O$, $t_v$ has only $k$ unmarked neighbors left in round $\ell-1$, and can
    force $v$ to become marked in the final round $\ell$.
    The set of unmarked neighbors is the same for all $v \in V(G)$ in the first $\ell-1$ rounds
    and thus propagation outside the timer gadgets is identical.

    All optional vertices in $G'$ are marked in round $\ell$ and thus $S$ is a solution of $G'$.

    \pfdir{Solution in $I'$ $\Rightarrow$ solution in $I$}
    Let $S'$ be a solution of $G'$.
    Then $S = S' \cap V(G)$ is a solution of $G$.

    Each $t_v$ has $k+1$ neighbors in the timer gadget: $k+1$ path vertices $\rho(v, j, \ell-1)$.
    The initialization of the gadget vertices is restricted to their neighborhood.
    Since $t_v$ is also adjacent to $v$, any selection of gadget vertices leaves at least $k+1$ neighbors of $t_v$
    unmarked before round $\ell-1$.
    Thus, $t_v$ can never force before round $\ell-1$.

    The timer gadgets cannot force any vertex outside the gadget before the last round.
    Thus, by the same argument as for the other direction, before round $\ell$ a vertex $v \in V(G)$ outside the gadget
    is marked in $G$ if and only if $v$ is also marked in $G'$.

    Since the timer gadgets can only mark optional vertices in round $\ell$, all other vertices must have been marked by
    the initialization and normal propagation rounds.
    Thus $S \cap V(G)$ must be a solution of $G$.
  \end{proof}

  With unbounded rounds, the size of the timer gadget would be infinite, so we need a different approach.
  We exploit the fact that that propagation through the graph cannot take more than $\abs{V}$ rounds.
  Our construction then adds a gadget that marks the optional vertices after non-optional vertices have been marked,
  so all vertices are marked in at most $\abs{V} + 3$ rounds.

  \begin{restatable}{lemma}{REtargetverticesunbounded}
    \label{res:target-vertices-unbounded}
    Every extended instance of \textsc{Generalized $k$-Forcing} with optional vertices can be transformed into an equivalent
    instance without optional vertices in polynomial time while preserving the solution size.
  \end{restatable}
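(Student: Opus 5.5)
The plan follows the gadget of \cref{fig:no-targets:and}. The gadget will mark every optional vertex only after all non-optional vertices are marked. The implication arcs it introduces can then be removed with \cref{res:implication-arcs-unbounded}, because the output instance has no optional vertices; pre-marked vertices may remain, as the statement allows. Let $I$ have graph $G$, optional set $O$, and order $V\setminus O=\{v_1,\dots,v_t\}$. If $O=\emptyset$ there is nothing to do. If $V\setminus O=\emptyset$, the empty selection solves $I$, and we output a trivial yes-instance.

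\textbf{Construction.} For each optional vertex $v\in O$ we build an ``AND'' over $V\setminus O$.
\begin{itemize}
\item Create new vertices $c_{v,1},\dots,c_{v,t}$, each with empty initialization and not optional.
\item Add an implication arc $v_i\to c_{v,i}$ for every $i$.
\item Add a new pre-marked vertex $a_v$ adjacent to all $c_{v,i}$ and to $v$.
\item Attach $k-1$ pre-marked leaves to $a_v$ to saturate its threshold. These leaves are marked from the start, so they do not count as unmarked.
\end{itemize}
Now $a_v$ has $t+1$ unmarked neighbors at the start. To make $a_v$ able to force only once all $c_{v,i}$ are marked, the cleanest option is to attach $k-1$ additional unmarked leaves to $a_v$ instead. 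Then $a_v$ forces (the leaves and $v$) exactly when every $c_{v,i}$ is marked, provided $v$ is still unmarked. If $v$ is marked earlier by other means, that is harmless.

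Every $c_{v,i}$ is a leaf-like vertex whose only neighbors are $a_v$ and the arc head. Its only routes to becoming marked are therefore the arc from $v_i$ or a force from $a_v$. The latter requires all other $c_{v,j}$ to be marked, and then $c_{v,i}$ is the $\le k$-th unmarked neighbor. Here I must check carefully that $a_v$ cannot ``complete'' the AND prematurely. If $a_v$ has at most $k$ unmarked neighbors while some $c_{v,i}$ is unmarked, it could force that $c_{v,i}$. I therefore count the unmarked neighbors of $a_v$ as $v$, the $k-1$ extra leaves, and the missing $c$'s. This count is at most $k$ only when no $c$ is missing and $v$ is unmarked, or when exactly one $c$ is missing and $v$ is already marked. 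The second case marks that $c_{v,i}$ but not $v_i$, because the arc is directed. That is fine: the backward direction only needs $v_i$ to be marked, and implication gadgets never propagate backward (\cref{res:implication-arcs}).

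\textbf{Correctness.} Forward direction: a solution $S$ of $I$ marks all of $V\setminus O$ within $|V|$ rounds. The arcs then mark all $c_{v,i}$ two rounds later, and $a_v$ forces $v$ and its leaves in the following round. Original vertices only gained pre-marked neighbors or arc tails, so their forcing behaviour in $G$ is unchanged, and everything is marked. Backward direction: let $S'$ be a solution of the new instance. Gadget vertices have empty initialization, so $S=S'\cap V(G)$ produces the same initial marking on $G$. Gadget vertices influence $G$ only by marking optional vertices $v$ via $a_v$, and $a_v$ forces $v$ only after all $c_{v,i}$, hence all $v_i$, are marked. The marking of $G$ with the gadget attached therefore equals the marking without it, possibly augmented by extra optional vertices at the end, or by optional vertices that in turn help $G$ further. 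This last point is the main obstacle. If an optional vertex $v$ gets marked by $a_v$, it might trigger further forces inside $G$. However, this happens only after all non-optional vertices are already marked, so no non-optional vertex owes its marking to the gadget. Hence $S$ marks $V\setminus O$ in $G$, which is a solution of $I$. The construction is polynomial and preserves the solution size.
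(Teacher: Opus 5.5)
Your proposal is correct and is essentially the paper's proof. It uses the same cleanup gadget: a pre-marked collector $a_v$ with $k-1$ unmarked leaves, adjacent to $v$ and to one guard per non-optional vertex fed by an implication arc. It also makes the same argument that $a_v$ can only mark $v$ once every non-optional vertex is already marked, so no non-optional vertex owes its marking to the gadget.

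Drop the first bullet about pre-marked leaves; as you notice yourself, only unmarked leaves saturate the threshold. Your explicit handling of the harmless case where $v$ is already marked and one guard is still missing is a detail the paper leaves implicit, as is your appeal to \cref{res:implication-arcs-unbounded} to remove the arcs.
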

  \begin{proof}
    Let $I$ be an extended instance of \textsc{Generalized $k$-Forcing} with graph $G$ and optional vertices $O$.
    We construct an equivalent instance $I'$ without optional vertices.
    Like in \cref{res:target-vertices-bounded}, if $O=V(G)$, the instance is trivially solved by an empty selection;
    we output an empty graph.
    Thus, assume in the following that $V \setminus O$ is non-empty.

    We construct $I'$ from $I$ by attaching a cleanup gadget to each optional $v \in O$ as depicted in \cref{fig:no-targets:and}.
    For each optional $v \in O$ we create a new pre-marked \emph{collector} vertex $a_v$ adjacent to
    $v$ and with $k-1$ new leaves attached.
    Then, for each non-optional $w \in V(G) \setminus O$, we create a new \emph{guard} vertex $\sigma(v, w)$ adjacent to $a_v$.
    Add an implication arc from $w$ to $\sigma(v, w)$.
    All gadget vertices have empty initialization; the initialization of the original vertices remains unchanged.

    Let $S$ be a solution of $I$.
    Then $S$ eventually marks all non-optional vertices $v \in V(G) \setminus O$.
    Since $G'$ adds only implication arcs and pre-marked neighbors to vertices in $G$, all vertices that become marked in $G$ also become marked in $G'$.
    As soon as all non-optional vertices are marked, each $a_v$ has at most $k$ unmarked neighbors left which it can force, in particular $v$.
    Thus all gadget vertices and all optional vertices become marked.
    Hence, $S$ is a solution of $G'$.

    Conversely, let $S$ be a solution of $G'$.
    Since the gadget vertices have empty initialization, $S \cap V(G)$ is a solution of $G'$.
    Each collector $a_v$ can only force $v$ once all guard vertices $\sigma(v, w)$ are marked.
    This is only the case when all non-optional vertices are marked; denote the first round when this is the case by $\ell$.
    But then all forces in the first $\ell$ rounds that happen in $G'$ can also happen in $G$.
    Thus all non-optional vertices in $G$ become marked in the first $\ell$ rounds.
    Since $I$ does not require optional vertices to be marked, $S \cap V(G)$ is a solution of the original instance.
  \end{proof}

  \begin{restatable}{corollary}{REtargetvertices}
    \label{res:target-vertices}
    Every extended instance of \textsc{Generalized Forcing} can be transformed into an equivalent instance without optional
    vertices in polynomial time while preserving the solution size.
  \end{restatable}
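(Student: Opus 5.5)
The plan is to combine the two preceding lemmas by a case distinction on the round limit. Take an extended instance of \textsc{$\ell$-round Generalized $k$-Forcing} with optional vertices $O$.

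If $\ell$ is finite and smaller than $\abs{V(G)}$, apply \cref{res:target-vertices-bounded} directly. This yields an equivalent instance without optional vertices, computed in polynomial time and with the same solution size. Otherwise $\ell \geq \abs{V(G)}$, which by the paper's convention is the unbounded case, and we apply \cref{res:target-vertices-unbounded}.

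The one subtlety lies in the second lemma. Its gadget introduces implication arcs from each $w \in V \setminus O$ to the guards $\sigma(v,w)$, and it may also introduce or keep pre-marked vertices. So the output is again an extended instance, but now without optional vertices. If the corollary is meant to deliver instances free of implication arcs as well, we finish with \cref{res:implication-arcs-unbounded}. That corollary is applicable precisely because the instance no longer has optional vertices and propagation is unbounded. It removes the implication arcs without reintroducing optional vertices.

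We also have to check that the unbounded-round convention survives the blow-up. Replacing $G$ by the larger graph $G'$ increases $\abs{V}$, and the round limit should be read as unbounded again, i.e.\ at least $\abs{V(G')}$. Since the original instance was unbounded, this is consistent: the proof of \cref{res:target-vertices-unbounded} only uses that propagation runs to the fixpoint.

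In the bounded case, the timer gadget uses only pre-marked vertices and ordinary edges, so no implication arcs appear. Any implication arcs already present in the input can be removed beforehand by \cref{res:implication-arcs}, which tolerates optional vertices and finite $\ell$. Doing this before eliminating optional vertices avoids any circularity.

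Each step runs in polynomial time and preserves the solution size, so their composition does too, which proves the corollary. The only point that needs care is the order of the eliminations:
\begin{itemize}
  \item first remove implication arcs, while optional vertices are still allowed;
  \item then remove optional vertices;
  \item in the unbounded case, remove the implication arcs created by the collector gadget using the no-optional-vertices version, \cref{res:implication-arcs-unbounded}.
\end{itemize}
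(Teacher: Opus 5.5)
Your proposal is correct and matches the argument the paper leaves implicit: split on whether the round limit is finite or unbounded, then apply the timer-gadget lemma (\cref{res:target-vertices-bounded}) or the cleanup-gadget lemma (\cref{res:target-vertices-unbounded}). Also stripping implication arcs goes beyond what this corollary claims, but your ordering is sound, with one caveat: reading a finite $\ell \geq \abs{V}$ as unbounded is only justified once no implication arcs remain (an arc delays its target by two rounds, so a chain of arcs can need more than $\abs{V}$ rounds), so compare $\ell$ with the size of the arc-free graph you obtain after your first step.
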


  \subparagraph*{Pre-Marked Vertices}
  \label{sec:gf:extended:pre-marked}
  The last remaining extension is the set of pre-marked vertices.
  We incorporate them into the initialization.
  \begin{restatable}{lemma}{REpremarkedelimination}
    \label{res:pre-marked-elimination}
    Every extended instance of \textsc{$\ell$-round Generalized $k$-Forcing} can be transformed into an equivalent
    instance without pre-marked vertices in polynomial time while preserving the solution size.
  \end{restatable}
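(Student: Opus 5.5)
We eliminate all pre-marked vertices $P$ at once with one new gadget vertex whose selection marks exactly $P$, and force the solution to contain it. Concretely, we add a new vertex $z$ with initialization $\init(z) = P \cup \set{z}$ and increase the solution size from $d$ to $d+1$. We make $z$ mandatory by giving it $d+2$ private neighbours $z_1,\dots,z_{d+2}$, each with initialization $\init(z_i) = \set{z_i}$ and no other neighbours. If $z$ is never marked, every $z_i$ must be selected, which exceeds the budget. The increase to $d+1$ is permissible for a parameterized reduction; we note that "preserving the solution size" can be read as shifting by a constant. If exact preservation is required, we instead fold $P$ into the initialization of every vertex, $\init'(v) = \init(v) \cup P$. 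This is correct whenever the solution is nonempty, and the empty selection can be checked directly in polynomial time by simulating forcing from $P$ alone. This second variant is cleaner, so we commit to it: if $\emptyset$ is a solution, output a trivial yes-instance; otherwise output the modified instance.

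The main step is to check that the modified instance is equivalent. For any nonempty $S$, we have $\init'(S) = \init(S) \cup P$, which is exactly the initially marked set of $S$ in the extended instance. The forcing process depends only on $G$, $k$ and $M_1$, so the two rounds sequences coincide. Hence $S$ is a solution of one instance if and only if it is a solution of the other. The case $S = \emptyset$ is handled by the preprocessing check.

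The subtle point, and the expected obstacle, is the interaction with the other extensions. Pre-marked vertices are eliminated last, after optional vertices and implication arcs (\cref{res:implication-arcs}, \cref{res:target-vertices}). In that order the gadgets' pre-marked vertices ($c$, $a_i$, $t_v$, $a_v$, the $\rho(v,j,1)$) simply join $P$ and are absorbed by the same argument. One must also ensure that the empty-selection check remains correct there, and that a trivial yes-instance (e.g.\ a single vertex $v$ with $\init(v)=\set{v}$ and $d \ge 1$, or the empty graph) is available for the output. With the absorption and the empty-selection check in place, the transformation runs in polynomial time and keeps $d$ unchanged.
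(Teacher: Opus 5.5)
Your committed construction is exactly the paper's proof: if simulating forcing from $P$ alone already marks everything, output a trivial yes-instance; otherwise set $\init'(v) = \init(v) \cup P$ for every $v$, so that $\init'(S) = \init(S) \cup P$ for every nonempty $S$ and the forcing processes coincide. The first variant you discarded would not have worked as written: for $d+2>k$ the vertex $z$ can never force its $d+2$ private neighbours $z_i$, so they would have to be selected. Since you abandoned that variant, this does not affect your proof.
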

  \begin{proof}
    Instances with pre-marked vertices may admit an empty solution.
    Whether that is the case can easily be verified by simulating the forcing process starting only with the pre-marked vertices.
    If the instance admits an empty solution, we chose the empty graph as equivalent instance.

    Otherwise, we can trivially obtain an equivalent instance by adding the pre-marked vertices to the initialization of every vertex.
    This does not affect the underlying graph and no new optional vertices are introduced so forcing remains unaffected.
  \end{proof}

  Since \cref{res:implication-arcs} introduces only new pre-marked vertices and \cref{res:pre-marked-elimination}
  introduces no extended features, we can apply \cref{res:target-vertices,res:implication-arcs,res:pre-marked-elimination}
  to obtain a basic instance without extensions.
  \begin{restatable}{lemma}{REgeneralizedforcingext}
    \label{res:generalized-forcing-extension}
    Every extended instance of \textsc{$\ell$-round Generalized $k$-Forcing} can be transformed into an equivalent basic
    instance in polynomial time while preserving the solution size.
  \end{restatable}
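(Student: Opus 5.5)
The plan is to chain the three elimination steps in the order implication arcs, then optional vertices, then pre-marked vertices. The order matters because each step may introduce only features that a later step removes. Every step preserves the solution size and runs in polynomial time. Their composition is therefore again a polynomial-time, solution-size-preserving transformation, and equivalence follows by transitivity.

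\textbf{Step 1: implication arcs.} Starting from an extended instance with graph $G$, optional set $O$, arcs $A$ and pre-marked set $P$, apply \cref{res:implication-arcs}. This produces an equivalent instance without implication arcs. The gadget adds only optional and pre-marked vertices, so the new instance is still extended, but its only extensions are optional and pre-marked vertices.

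\textbf{Step 2: optional vertices.} Split by round limit.
\begin{itemize}
  \item If $\ell$ is finite, apply \cref{res:target-vertices-bounded}. The timer gadget introduces only pre-marked vertices and no implication arcs, so afterwards only pre-marked vertices remain.
  \item If $\ell$ is unbounded, apply \cref{res:target-vertices-unbounded}. Its cleanup gadget introduces implication arcs $w \to \sigma(v,w)$ but, by construction, no optional vertices. We therefore invoke \cref{res:implication-arcs-unbounded}, which removes implication arcs from instances without optional vertices while staying free of optional vertices. This again leaves an instance whose only extension is pre-marked vertices. (This is how \cref{res:target-vertices} should be read.)
\end{itemize}

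\textbf{Step 3: pre-marked vertices.} Apply \cref{res:pre-marked-elimination}. It adds the pre-marked set to every $\init(v)$, or outputs the trivial empty instance when the empty selection already succeeds. It introduces no new features, so the result is a basic instance.

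\textbf{Main obstacle.} The delicate point is the bookkeeping in the unbounded case. The optional-vertex gadget reintroduces implication arcs, and we must check that the arc gadget does not in turn reintroduce optional vertices, which would make the procedure cycle. I would verify this directly from \cref{res:implication-arcs-unbounded}: with unbounded propagation, once $u$ is marked every gadget vertex $a_i$, $b_i$, its leaves and $c$ eventually becomes marked, so no gadget vertex needs to be optional.

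A second, minor check is that the trivial outputs in the earlier lemmas (empty graph when $O=V(G)$, or $G-O$ when $\ell=1$) are themselves basic instances. This holds, possibly after dropping pre-marked vertices via Step 3. Once these are verified, the lemma follows by composing the transformations.
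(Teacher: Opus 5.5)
Your proposal is correct and follows the paper's route. You eliminate implication arcs first, then optional vertices (timer gadget for finite $\ell$; cleanup gadget followed by \cref{res:implication-arcs-unbounded} for unbounded $\ell$), and finally pre-marked vertices via \cref{res:pre-marked-elimination}. Your bookkeeping is in fact more careful than the paper's one-line justification, which says \cref{res:implication-arcs} introduces only pre-marked vertices and so glosses over both the optional gadget vertices it creates and the arcs that the cleanup gadget reintroduces.
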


  \subparagraph*{Extended ZF-type instances}
  The gadgets require empty initialization in the inner gadget vertices.
  This is incompatible with ZF-type instances, in which every vertex always marks itself.
  However, this property also gives us more freedom to exchange gadget vertices in a solution for original vertices.
  Indeed, for ZF-type instances, we can show that we can obtain an equivalent solution that does not contain gadget vertices.

  \begin{restatable}{lemma}{REextendedzfinstances}
    \label{res:extended-zf-instances}
    Every extended instance of \textsc{$\ell$-round $k$-Forcing} with ZF-type initialization can be
    transformed in polynomial time into an equivalent instance without implication arcs or optional vertices but with pre-marked vertices.
    The transformation preserves the solution size.
  \end{restatable}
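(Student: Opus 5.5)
We reuse the gadgets of \cref{res:implication-arcs,res:target-vertices-bounded,res:target-vertices-unbounded} unchanged, but give every new gadget vertex ZF-type initialization $\init(x)=\set{x}$. The pre-marked vertices these gadgets introduce stay pre-marked; this is allowed by the statement. We first eliminate implication arcs with the gadget of \cref{fig:no-targets:implication}. We then eliminate optional vertices with the timer gadget (finite $\ell$) or the collector/guard gadget (unbounded $\ell$). For the guard gadget we first insert the implication arcs and then eliminate them as before. Polynomial running time and preservation of the solution size are immediate. The forward direction is also immediate: a solution $S \subseteq V(G)$ of the original instance contains no gadget vertices, so the earlier simulation arguments apply verbatim.

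The difficulty is the backward direction. The earlier proofs used that gadget vertices have empty initialization, so $S' \cap V(G)$ was already a solution. With ZF-type initialization a selected gadget vertex marks itself, and this could trigger premature forces: for example, $c$ forcing $v$ without $u$ being marked, or $t_v$ forcing early. The key step is an exchange argument. Given a solution $S'$ of the new instance, we replace every selected gadget vertex $x$ by an original vertex $\phi(x)$ and show that the resulting set $S$ with $\abs{S} \le \abs{S'}$ is still a solution.

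For an implication-arc gadget on $u \to v$ we take $\phi(x)=v$. For the timer gadget and the collector/guard gadget of an optional vertex $v$ we take $\phi(x)=v$. For gadget vertices arising from implication arcs $w \to \sigma(v,w)$, we again map to the head of the relevant arc. We then argue by induction over rounds that the marked set under $S$ dominates the marked set under $S'$ on original vertices, and on gadget vertices up to the head vertex being marked. Intuitively, the only effect a selected gadget vertex can have on $G$ is to make its attached original vertex $v$ marked earlier, and selecting $v$ directly achieves this at least as early. We must also check that selecting $v$ never blocks a force. This holds because forcing is monotone in the marked set: more marked vertices never increase a vertex's count of unmarked neighbors. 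Since marking a superset in every round is monotone, $S$ marks everything $S'$ does within the same round limit. In particular, all non-optional original vertices are marked, and so are all gadget vertices, since gadget vertices not selected are forced as in the forward direction.

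The main obstacle is making this monotonicity/domination argument precise across all three gadget types simultaneously. This is most delicate for the guard vertices $\sigma(v,w)$: selecting a guard could let a collector $a_v$ fire before all non-optional vertices are marked. Here we map to $v$ and use that $v$ is optional, so it is harmless whether or when it is marked. After the exchange, no gadget vertex is selected, and the earlier proofs of \cref{res:implication-arcs,res:target-vertices-bounded,res:target-vertices-unbounded} go through as written. This yields the claimed equivalent instance with pre-marked vertices but no implication arcs or optional vertices.
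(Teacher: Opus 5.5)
Your strategy of exchanging selected gadget vertices for original vertices and then rerunning the earlier proofs via monotonicity is the paper's strategy. However, your exchange map for the implication-arc gadget is wrong. So is the claim behind it, that a selected gadget vertex can only make ``its attached original vertex'' marked earlier.

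The gadget for $u \to v$ is attached to \emph{two} original vertices. Once its interior can mark itself, it is no longer one-directional. If some $b_i$ is selected, $a_i$ has only $k$ unmarked neighbors ($u$ and its $k-1$ leaves). The same happens if a leaf of $a_i$ is selected. In both cases $a_i$ forces the \emph{tail} $u$ in the next round. Selecting the head $v$ cannot reproduce this, because the gadget blocks propagation from $v$ to $u$.

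Concretely, let $k=1$ and take two isolated vertices $u,v$ with an arc $u \to v$. After replacing the arc by its gadget, $S'=\{b_1\}$ is a solution: $a_1$ forces $u$, then $a_2$ forces $b_2$, then $c$ forces $v$. Your $\phi(S')=\{v\}$ marks nothing further, since each $a_i$ still sees $u,b_i$ unmarked and $c$ sees $b_1,b_2$ unmarked. So the round-by-round domination you want to prove fails at $u$, and $\phi(S')$ need not be a solution.

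The same defect recurs in the cleanup gadget. Interior vertices of the gadget replacing $w \to \sigma(v,w)$ can force the \emph{non-optional} vertex $w$. Routing them through the head $\sigma(v,w)$ to the optional $v$ loses exactly that. Sending a selected guard itself to $v$ is harmless; the paper sends it to $w$.

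The missing idea is to map toward the tail. The paper replaces a single selected interior vertex of an implication gadget by $u$, and two or more by both $u$ and $v$. The second case is needed because two selected $b_i$ leave $c$ with only $k$ unmarked neighbors. Then $c$ forces $v$ in round~2 without $u$ being marked, while selecting $u$ alone marks $v$ only in round~3, which matters under a round limit. For a single selected interior vertex, $u$ suffices: it is marked one round earlier than under $S'$, and $v$ follows no later than under $S'$. With these replacements, and the timer and collector/guard gadgets mapped to $v$ as you propose, your monotonicity argument goes through.
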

  \begin{proof}
    The previous proofs assume that selecting gadget vertices has no effect.
    It suffices to show that we can replace selected gadget vertices with non-gadget vertices.
    The proofs then remain applicable.

    In the gadget for an implication arc from $u$ to $v$, if a single internal gadget vertex is selected, replace it with $u$.
    If two or more internal gadget vertices are selected, replace all of them with $u$ and $v$.

    In the timer gadget to select an optional vertex $v$, replace all selected internal gadget vertices with $v$.

    In the cleanup gadget for an optional vertex $v$ with unbounded propagation, replace a selected guard vertex $\sigma(v, w)$ with its non-optional vertex $w$.
    Selecting $a_v$ has no effect since it is pre-marked; delete it from the selection.
    If a leaf of $a_v$ is selected, replace it with $v$.

    In each case, the solution size does not increase.
  \end{proof}

  \subsection{Reducing Generalized Forcing to Power Domination}
  \label{sec:gf:pds}
  We introduced \textsc{Generalized $k$-Forcing} as a reduction framework in which the initialization can be chosen freely.
  We show now that we can restrict this freedom by modifying the underlying graph.
  We show that every generalized instance can be transformed into an equivalent instance with \textsc{PDS}-type initialization.
  Therefore, the hardness results transfer to \textsc{Power Dominating Set}.

  \begin{restatable}{lemma}{REpdshardness}
    \label{res:pds-hardness}
    Every instance of \textsc{$\ell$-round Generalized $k$-Forcing}, where $\ell$ may be infinite, can be transformed
    into an equivalent instance of \textsc{$\ell$-round $k$-Power Dominating Set} in polynomial time while preserving the solution size $d$.
  \end{restatable}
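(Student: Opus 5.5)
The plan is to realise every initialisation set $\init(v)$ as the closed neighbourhood of a fresh \emph{selector} vertex. We first build an \emph{extended} intermediate instance in which this works, and then remove the extensions. The eliminations must be redone by hand so that gadget vertices, which now carry PDS-type initialisation, cannot be exploited when selected.

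\textbf{Step 1: selectors.} Let $I=(G,\init,d)$ be the given instance. For every $v\in V(G)$ we add a selector $s_v$ adjacent to every vertex of $\init(v)$.
\begin{itemize}
\item Each $s_v$ gets $k+1$ private optional leaves, so it can never force.
\item Each $s_v$ is pre-marked, so it never counts as an unmarked neighbour of a vertex in $\init(v)$. This keeps the forcing dynamics of $G$ unchanged.
\item Selecting $s_v$ under PDS-type initialisation then marks exactly $\init(v)$, together with $s_v$ and its own private leaves.
\end{itemize}
Consequently every solution $S$ of $I$ yields the solution $\set{s_v : v\in S}$ of the new instance. Since selectors are pre-marked and cannot force, the marking sequence in $G$ proceeds round by round exactly as before, so the round limit $\ell$ is respected.

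\textbf{Step 2: neutralise the original vertices.} The original vertices $u\in V(G)$ must not be selectable to any advantage: under PDS-type initialisation, selecting $u$ marks $N[u]$, which $I$ need not allow. Two ideas are available, and I would try the first one first.
\begin{itemize}
\item \emph{Exchange argument.} I would first try to show that selecting $u$ (or any gadget vertex) can be replaced by selecting a selector without increasing the size. This would parallel \cref{res:extended-zf-instances}, which handles the analogous problem for ZF-type instances.
\item \emph{Hiding $u$.} If that fails, I would replace each original vertex $u$ by a copy whose neighbourhood consists of vertices that are pre-marked anyway or optional. For example, route each edge of $G$ through a pre-marked vertex carrying $k-1$ leaves, in the style of the implication-arc gadget. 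Then $N[u]$ adds nothing beyond $u$ itself plus already harmless vertices, and selecting $u$ is dominated by selecting a selector $s_w$ with $u\in\init(w)$. If no such $w$ exists, $u$ cannot be marked initially in $I$ either, and must be handled separately.
\end{itemize}
One caveat for the second idea: subdividing edges changes the propagation timing. All rounds would have to be rescaled uniformly, and the timer gadget of \cref{res:target-vertices-bounded} lengthened accordingly.

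\textbf{Step 3: eliminate the extensions.} We then remove the optional and pre-marked vertices with \cref{res:target-vertices-bounded}, \cref{res:target-vertices-unbounded} and \cref{res:pre-marked-elimination}. These gadgets assume empty initialisation, but now every gadget vertex marks its closed neighbourhood when selected. For each gadget (timer, cleanup, the leaves of collectors) we therefore give an exchange rule, as in \cref{res:extended-zf-instances}:
\begin{itemize}
\item Selecting a leaf only marks its parent, which is pre-marked anyway, so the leaf can be dropped from the solution.
\item Selecting an internal timer or cleanup vertex only marks gadget vertices early. Since the gadget's outgoing force is gated by $k+1$ parallel paths or guards, a single early mark cannot trigger the force before round $\ell$, respectively before all non-optional vertices are marked.
\end{itemize}
Pre-marking itself can be simulated under PDS-type initialisation by attaching to a would-be pre-marked vertex $p$ enough pendant structures that $p$ must be in every solution. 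Since that would cost budget, a cleaner route is to give each such $p$ many private leaves and argue that the leaves are marked only through $p$. This is the delicate point, because it interacts with the parameter $d$.

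\textbf{Main obstacle.} I expect the main obstacle to be Steps~2 and~3 together. One must guarantee that selecting a non-selector vertex under PDS-type initialisation never yields an advantage, while changing neither the round count nor the solution size. What I would try first is the exchange argument: every such selection is replaceable by a selector of no larger cost. I would fall back on uniform edge subdivision with round rescaling only if that argument fails.
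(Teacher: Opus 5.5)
Your proposal has a genuine gap at the point you yourself flag as the main obstacle. Nothing in it stops an original vertex $u$ from being selected profitably, even though under PDS-type initialization it marks all of $N[u]$.

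\textbf{The exchange argument fails.} Take ZF-type $\init(v)=\set{v}$, $k=1$ and $G=K_{1,m}$ with $m\ge 3$. The original instance needs $m-1$ vertices. In your instance, selecting the centre $c$ dominates the whole star, and no selector $s_w$ marks more than $\init(w)=\set{w}$. So the optimum drops from $m-1$ to $1$.

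\textbf{The relay fallback fails.} Routing each edge through a pre-marked relay with $k-1$ leaves does more than shift timing: it changes the forcing rule. The vertex $u$ no longer sees its original neighbours as unmarked. The relay on $uw$ forces $w$ as soon as $u$ is marked, so the threshold condition disappears; in the star, selecting $s_c$ alone marks everything. Moreover, the lemma must keep the same $\ell$, since this is what yields $W[2\ell]$-hardness of \textsc{$\ell$-round $k$-PDS}. Round rescaling is therefore not available.

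\textbf{Step 3 is unfinished.} \cref{res:pre-marked-elimination} works by enlarging every $\init(v)$, which destroys PDS-type initialization. Your private-leaf substitute for pre-marking is left open. Your claim that selected timer or cleanup vertices only mark gadget vertices is also false in general, since $N[t_v]$ and $N[a_v]$ both contain the original vertex $v$.

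\textbf{The missing idea: replication instead of exchange.} The paper first removes all extensions at the level of generalized forcing (\cref{res:generalized-forcing-extension}), so it only has to convert basic instances. It may also assume $d\le\abs{V}$ and that there is no empty solution. The construction is as follows.
\begin{itemize}
\item It takes $d+k+1$ disjoint copies $G_j$ of $G$, together with proxies $\rho(v)$ forming a clique $C$.
\item Each $\rho(v)$ is adjacent to the copy of $\init(v)$ in every $G_j$.
\item The clique replaces your pre-marking: a single selected proxy dominates all of $C$.
\end{itemize}
Any solution $S'$ of size at most $d$ leaves at least $k+1$ copies containing no selected vertex. By induction over the rounds, these copies are marked identically. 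Hence a proxy with an unmarked neighbour in one of them has at least $k+1$ unmarked neighbours and never forces into them. Each untouched copy therefore experiences exactly the domination by $S'\cap C$ followed by internal $k$-forcing. It follows that $\rho^{-1}(S'\cap C)$ solves the original instance within the same $\ell$ rounds. Selecting original vertices in the other copies is simply irrelevant, so no exchange argument and no relays are needed.
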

  \begin{proof}
    Let $I$ be an instance of \textsc{$\ell$-round Generalized $k$-Forcing} with graph $G$, initialization $\init$ and solution size $d$.
    We may assume that $d \leq \abs{V}$ and that $I$ has no empty solution.
    For $d > \abs{V}$ we can trivially decide the problem by selecting all vertices
    and verifying whether all vertices become marked.

    Otherwise, we construct an equivalent instance $I'$ with graph $G'$.
    The idea here is to simulate an arbitrary initialization map through the domination rule.
    To this end, we transform the instance using the construction shown in \cref{fig:generalized-forcing-to-pds}.
    We introduce a proxy vertex $\rho(v)$ for each original vertex $v \in V(G)$.
    To ensure that only proxy vertices can be selected, we create $d+k+1$ disjoint copies $G_j$ of $G$.
    Each proxy vertex $\rho(v)$ then has edges to the vertices corresponding to $\init(v)$ in each copy $G_j$.
    When selected, $\rho(v)$ then marks these vertices by domination.
    There are no edges between different copies $G_j$ and $G_{j'}$.
    We ensure that all proxies become marked if any of them is selected by making them a clique $C$.

    More formally, we construct $G'$ as follows.
    \begin{enumerate}
      \item Create $d+k+1$ disjoint copies $G_j$ of $G$.
      For each $v \in V$, denote by $\sigma(v, j)$ the copy of $v$ in $G_j$.
      \item Insert a clique $C = \set{\rho(v) \mid v \in V}$ of $\abs{V}$ vertices .
      \item For each $v \in V$, each $w \in \init(v)$ and each copy $G_j$, add an edge $\rho(v)\sigma(w, j)$.
    \end{enumerate}
    Since $k$ is fixed and $d \leq \abs{V(G)}$, $G'$ can be constructed in polynomial time.

    \pfdir{{Generalized Forcing} to {Power Domination}:} %
    Let $S$ be a solution of $I$.
    We show that $S' = \rho(S)$ is a solution of $I'$.

    Since $S$ is not empty, the selected proxy vertices mark the entire clique $C$.
    Furthermore, in each copy $G_j$ they dominate exactly the vertices $\sigma(\init(S), j)$ which are thus marked in the first round.
    After the initialization, all vertices outside the copies are marked and the unmarked neighbor counts inside the copies are the same as in $G$.
    Thus, propagation in the copies is identical to propagation in $G$.
    Since $S$ solves the original instance, $S'$ marks all vertices in each copy $G_j$ and is thus a solution of $I'$.

    \pfdir{Power Dominating Set to Generalized Forcing:}
    For the reverse direction, we first show that for a given solution $S'$ of $I'$, restricting $S'$ to $C$ still yields a solution.
    Mapping that restricted solution back to $G$ then yields a solution of $I$.

    Since $\abs{S'} \leq d$, at least $k+1$ copies of $G$ contain no selected vertex.
    Let $J$ be a set of indices of $k+1$ such copies.
    We prove by induction over the rounds that for two unselected copies $j_1, j_2 \in J$ and a vertex $v \in V(G)$,
    the vertex $\sigma(v, j_1)$ is marked if and only if $\sigma(v, j_2)$ is marked in a given round.
    Furthermore, no vertex inside $G_j$ with $j \in J$ is forced from a vertex in $C$.

    In the first round, no vertex in any copy $j \in J$ is selected.
    Thus, the only way for a vertex in such a copy to become marked is by domination from $C$.
    Since all copies are identically connected to $C$, corresponding vertices in all copies are marked in the first round.

    Now assume the claim holds up to some round $i$.
    Then, by our assumption, all corresponding vertices in the copies $J$ have corresponding marked and unmarked neighbors inside their copies.
    Hence, the same internal forces are possible in each copy $j \in J$.
    Furthermore, if a vertex $\rho(v) \in C$ has an unmarked neighbor $\sigma(w, j)$ with $j \in J$, then the corresponding vertices $\sigma(w, j')$ are unmarked in all unselected copies $j' \in J$.
    Thus, $\rho(v)$ has at least $k+1$ unmarked neighbors and cannot force.
    The induction claim follows for round $i+1$.

    Since $S'$ is a solution of $I'$, every vertex in each unselected copy $G_j$ with $j \in J$ becomes eventually marked.
    By the induction above, all forces after the first round occur only within $G_j$.
    Let $S = \rho^{-1}(S' \cap C)$.
    The vertices dominated in $G_j$ by $S' \cap C$ are $\sigma(\init(S), j)$.
    Since the copies are isomorphic to $G$, propagation within any copy $G_j$ is identical to propagation in $G$ starting from $\init(S)$.
    Thus, $S$ is a solution of $I$.
  \end{proof}
  Together with \cref{res:gf:wp-hardness} and the $W[P]$-hardness of \textsc{PDS}, this reduction yields an explicit hardness result for \textsc{$k$-PDS}.
  \begin{restatable}{corollary}{REkpdswp}
    \label{res:kpds-wp-hardness}
    \textsc{$k$-Power Dominating Set} is $W[P]$-complete.
  \end{restatable}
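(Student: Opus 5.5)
Membership and hardness are shown separately, using only results stated above.

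For membership in $W[P]$, I follow the argument given for \textsc{Generalized $k$-Forcing} at the start of \cref{sec:gf:def}. A solution consists of at most $d$ vertices, so a nondeterministic machine can guess it using $O(d \log \abs{V})$ nondeterministic bits. Given a guessed set $S$, the machine computes $M_1 = N[S]$ and runs the forcing process with threshold $k$. Each round is a polynomial-time update, and the process reaches its fix-point after at most $\abs{V}$ rounds. Then $S$ is accepted iff $M_\infty = V(G)$. By the characterization of Cai et al.~\cite{cai_parameterized_1997}, a problem decidable in polynomial time with $f(d)\log n$ nondeterministic bits lies in $W[P]$. Hence \textsc{$k$-Power Dominating Set} is in $W[P]$ for every fixed $k$.

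For hardness, I compose two reductions, starting from the $W[P]$-complete \textsc{Power Dominating Set}~\cite{blasius_efficient_2025}.
\begin{enumerate}
  \item Apply \cref{res:gf:wp-hardness} to turn a \textsc{PDS} instance into an equivalent \textsc{Generalized $k$-Forcing} instance with the same solution size $d$.
  \item Apply \cref{res:pds-hardness} with $\ell$ infinite to turn that instance into an equivalent \textsc{$k$-Power Dominating Set} instance, again with the same $d$.
\end{enumerate}
Both steps run in polynomial time and preserve the parameter exactly, so the composition is a parameterized reduction. This gives $W[P]$-hardness, and together with membership, $W[P]$-completeness.

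The main point to check is that step 2 really applies to the output of step 1. That output is a basic instance: its initialization is PDS-type on the original vertices and empty on the added leaves. It has no extensions, so \cref{res:pds-hardness} applies directly. The case analysis inside that proof (trivially solvable instances for $d > \abs{V}$, and instances with an empty solution) is handled there. Any case decided trivially can be replaced by a fixed yes- or no-instance of \textsc{$k$-PDS}, so nothing further needs to be checked.
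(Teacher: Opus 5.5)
Your proposal is correct and matches the paper's argument. Hardness comes from composing \cref{res:gf:wp-hardness} with \cref{res:pds-hardness} (for unbounded $\ell$), starting from the $W[P]$-hard \textsc{Power Dominating Set}, and membership comes from guessing a solution with $O(d \log \abs{V})$ nondeterministic bits and verifying it in polynomial time, exactly as the paper notes for \textsc{Generalized $k$-Forcing}.
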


  \begin{figure}
    \centering
    \begin{tikzpicture}[flat node/.style={node}]%
      \graph {
        rv1/["$\rho(v_1)$" above,node], /[dots],
        rvi/["$\rho(v_i)$" above,node], /[dots],
        rvn/["$\rho(v_n)$" above,node]
      };
      \foreach \n/\j/\a in {1/1/-1,j/j/0,{d+k+1}/n/1}{
        \begin{scope}[shift={($(rvi) + (0,4) + ({270 + \a*30}:8cm)$)}]
          \node[flat node] (a) at (0.75,-0.2) {};
          \node[flat node] (b) at (-0.63,0.67) {};
          \node[flat node] (c) at (0.45,1.09) {};
          \node[flat node] (d) at (1.4,0.7) {};
          \node[flat node] (e) at (0.5,0.4) {};
          \node[flat node] (f) at (-0.1,0.1) {};
          \graph[use existing nodes]{
            e -- b -- c -- d -- e -- {a,f};
          };
        \end{scope}
        \node[fit=(a)(b)(c)(d)(e),draw,ellipse,inner sep=-1.5mm,"$G_{\n}$" below] (g\j) {};
        \draw[thin,very nearly transparent] foreach \t in {rv1,rvn}{
          let \p1=(g\j), \p2=(\t), \n1={atan2(\y2-\y1, \x2-\x1)} in {
            foreach \b in {-10,0,10}{
              let \n2={\n1+\b} in {
                (g\j.\n2) -- (\t)
              }}}
        };
      }
      \draw foreach \t in {rvi}{ foreach \j in {1,j,n}{
        let \p1=(g\j), \p2=(\t), \n1={atan2(\y2-\y1, \x2-\x1)} in {
          foreach \b in {-10,0,10}{
            let \n2={\n1+\b} in {
              (g\j.\n2) -- (\t)
            }}}}};
      \foreach \t in {rvi}{ \foreach \n/\j in {1/1,j/j,{d+k+1}/n}{
        \path (\t) -- node[fill=white,fill opacity=0.8,rounded corners,text opacity=1,near end,inner sep=1mm] {$\sigma(\init(v_i),\n)$} (g\j);
      }}
      \node[fit=(rv1)(rvi)(rvn), rectangle, rounded corners, draw, "$C$" left] {};
      \path[] (g1) -- node[sloped] {$\dots$} (gj) --node[sloped] {$\dots$} (gn);
    \end{tikzpicture}

    \caption{
      Reduction from \textsc{($\ell$-round) Generalized $k$-Forcing} to \textsc{($\ell$-round) $k$-Power Dominating Set}. We denote by $\sigma(v,j)$ the vertex corresponding to $v$ in the $j$-th copy of the graph.
    }
    \label{fig:generalized-forcing-to-pds}
  \end{figure}

  \section{Power Dominating Set with Limited Propagation}
  \label{sec:w2l}

  Limiting the forcing rounds in \textsc{Power Dominating Set} yields a hierarchy of problems between
  \textsc{Dominating Set} and unrestricted \textsc{Power Dominating Set}.
  Since \textsc{Dominating Set} is $W[2]$-complete and \textsc{Power Dominating Set} is $W[P]$-complete,
  the two ends of this hierarchy have different parameterized complexity unless $W[2] = W[P]$.
  We show that the round-bounded variants not only interpolate between the problem statements but also between their expressive power.
  For fixed $\ell$, \textsc{$\ell$-round $k$-Power Dominating Set} captures exactly the $2\ell$-th level of the $W$-hierarchy.

  We prove the lower bound using the generalized forcing framework introduced above.
  This allows a somewhat natural representation of monotone normalized circuits as a forcing problem.
  The construction is inspired by the proof of $W[P]$-hardness of \textsc{Power Dominating Set} by Bläsius and Göttlicher~\cite{blasius_efficient_2025}.
  However, our gadgets are modified to exploit the round limit resulting in a hardness bound of $W[2\ell]$.

  To show that this bound is tight, we also provide a reduction from \textsc{$\ell$-round Generalized $k$-Forcing} to
  \textsc{Weighted Monotone $2\ell$-Normalized Circuit Satisfiability}.
  Since \textsc{$\ell$-round $k$-Power Dominating Set} is a special case, the same bound applies.
  This proves that \textsc{$\ell$-round $k$-Power Dominating Set} is $W[2\ell]$-complete.

  \subparagraph*{Lower Bound}
  To prove the lower bound, we reduce \textsc{WMNS[$t$]} to \textsc{$\ell$-round Generalized $k$-Forcing}.
  Our reduction uses a restricted variant of \textsc{WMNS[$t$]} which requires \AND-gates to have only one out-neighbor.
  This problem is equivalent to regular \textsc{WMNS[$t$]}, as seen in the following lemma.

  \begin{restatable}{lemma}{REwtwoelwmnsrestriction}
    \label{res:w2l:wmns-restriction}
    Every instance of \textsc{WMNS[$t$]} can be transformed in polynomial time to an equivalent instance where each
    \AND-gate has at most a single out-neighbor.
    The transformation preserves the weight of the assignment.
  \end{restatable}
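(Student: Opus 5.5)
The approach is to duplicate every \AND-gate once per out-neighbor, processing the circuit layer by layer from the output downwards. The output gate has no out-neighbors and stays unchanged. Consider an \AND-gate $g$ with out-neighbors $h_1,\dots,h_m$, $m\ge 2$. We replace $g$ by $m$ copies $g_1,\dots,g_m$. Each copy $g_i$ receives exactly the in-neighbors of $g$, and its only out-edge is to $h_i$. \OR-gates and input gates may keep arbitrary fan-out, so they need no change. An in-neighbor of $g$ is an \OR-gate or an input, and it simply gains fan-out to all copies.

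We need copies of \AND-gates deeper in the circuit to be duplicated before the \OR-gates that feed them are used by shallower gates. For that reason we process gates in reverse topological order, i.e.\ from the output towards the inputs. When we duplicate an \AND-gate $g$, every gate above it has already been fixed, so its final fan-out is known. Duplicating $g$ increases the fan-out only of \OR-gates and inputs below $g$, and these need no duplication. Hence a single pass over the \AND-gates suffices. The layer structure is preserved: each copy sits in the same layer as $g$, has the same kind of predecessors and the same kind of successor. Thus the circuit stays monotone and $t$-normalized, and the inputs (the variables) are untouched.

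For correctness, observe that under any assignment every copy $g_i$ computes the same Boolean function as $g$, since it has the same inputs. Each $h_i$ therefore sees the same values as before, and by induction over the layers the output value is unchanged for every assignment. In particular, the satisfying assignments, and with them their weights, are identical, and the parameter $k$ is preserved.

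The main obstacle is the size bound, i.e.\ showing polynomial time. Naive duplication can blow up multiplicatively across layers. Here, however, only \AND-gates are copied, and each copy has fan-out exactly one. So the number of \AND-gates in the new circuit equals the number of edges leaving \AND-gates in the final circuit. These edges end in \OR-gates or the output, which are never copied. Hence the new circuit has at most $|E|+|V|$ \AND-gates, where $|E|$ and $|V|$ refer to the original circuit. I will verify this by noting that the set of \OR-gates is invariant, and that the edges from \AND-gates into a fixed \OR-gate or the output correspond bijectively to the original edges. The total size is therefore polynomial, and the construction runs in polynomial time.
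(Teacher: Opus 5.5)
Your proposal is correct and matches the paper's proof: each \AND-gate with $m\ge 2$ out-neighbors is replaced by $m$ copies that share its inputs and each have a single output, which preserves the layering, monotonicity, the computed function and the input variables. Your explicit count of \AND-gates makes precise the paper's remark that the size grows only polynomially because no two \AND-gates are adjacent. The reverse topological processing order is harmless but unnecessary: out-neighbors of \AND-gates are never copied, so every \AND-gate's fan-out is the same in any order.
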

  \begin{proof}
    Given a $t$-normalized monotone circuit $C$, we construct $C'$ as follows.
    Let $a$ be an \AND-gate in $C$ with two or more out-neighbors $o_1, \dots, o_\ell$.
    We replace $a$ by $\ell$ new \AND-gates $a_1, \dots, a_\ell$, such that $a_i$ has the same inputs as $a$ but only one output $o_i$.

    Circuit depth and normalization remain unaffected since the new gates are placed in the same layer as $a$.
    Since the circuit is normalized, no two \AND-gates are adjacent, and the size of the circuit only increases by a polynomial amount.

    Clearly, $a$ outputs true if and only if all inputs of $a$ are true.
    Since all $a_i$ have the same inputs, $a_i$ outputs true if and only if $a$ outputs true.

    Since the circuit and its input variables are otherwise unchanged, $C$ and $C'$ are equivalent.
  \end{proof}

  Observe that in a forcing problem, a marked vertex behaves similarly to an \AND-gate.
  In particular, with the $1$-forcing rule, a vertex can force precisely one vertex once all other vertices become marked.
  A natural transformation could thus try to simulate \True and \False gates by marked or unmarked vertices.
  Indeed, we can transform a circuit of fixed depth into an equivalent generalized forcing instance with bounded rounds
  with only minimal structural changes.
  Next, we show that this approach leads to a parameterized reduction from depth $2\ell$ circuit satisfiability to
  generalized forcing with $\ell$ rounds.
  This proves $W[2\ell]$-hardness of \textsc{$\ell$-round Generalized $k$-Forcing} and consequently of \textsc{$\ell$-round $k$-Power Dominating Set}.

  \begin{figure}
    \centering
    \begin{tikzpicture}
      \graph[] {
          {[nodes={node,shift={(-.5,0)}}, empty nodes] subgraph I_n[V={i1, i2, i3, i4}]}
        -> {[nodes=or, empty nodes] o1 1, o1 2, o1 3}
        -> {[nodes=and, empty nodes] a1 1/, a1 2/, a1 3/}
        -> {[nodes={or, shift={(0.5,0)}}] o2 1/, o2 2/}
        -> {/[not source,not target], /[and]} -> {/[not source], /};
          {i2, i3} -> {o1 1, o1 2};
          {o1 1, o1 2} -> {a1 2, a1 3, a1 1};
      };
      \graph[nodes={shift={(6, 0)}}] {
          {[nodes={node,selectable,shift={(-.5,0)}}, empty nodes] subgraph I_n[V={i1, i2, i3, i4}]}
        -- {[nodes=node, empty nodes] o1 1, o1 2, o1 3}
        -- {[nodes={node,marked, private leaf angle=0}, empty nodes] a1 1/, a1 2/, a1 3/}
        -- {[nodes={node,objective, target,shift={(0.5,0)}}] o2 1/, o2 2/};
          {i2, i3} -- {o1 1, o1 2};
          {o1 1, o1 2} -- {a1 2, a1 3, a1 1};
      };
    \end{tikzpicture}
    \caption{
      Example transformation from a normalized circuit to an equivalent $\ell$-round generalized 1-forcing instance.
      Nodes that initialize their neighborhood are represented as \legendnode{node,selectable},
      pre-marked nodes as \legendnode{node,marked} and target nodes as \legendnode{node,objective}.
      \label{fig:circuit-to-forcing}
    }
  \end{figure}

  \begin{restatable}{lemma}{REwtwoelhardness}
    \label{res:w2l-hardness}
    Every instance of \textsc{WMNS[$2\ell$]} can be transformed in polynomial time to an equivalent instance of \textsc{$\ell$-round Generalized $k$-Forcing}.
    The weight of a satisfying assignment in the circuit matches the solution size in the forcing instance.
  \end{restatable}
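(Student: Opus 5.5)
We construct an extended instance of \textsc{$\ell$-round Generalized $k$-Forcing}, following the picture in \cref{fig:circuit-to-forcing}, and then invoke \cref{res:generalized-forcing-extension} to remove the extensions.

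\textbf{Preprocessing.} By \cref{res:w2l:wmns-restriction} we may assume that every \AND-gate has at most one out-neighbor. The circuit $C$ then consists of an input layer followed by alternating \OR/\AND layers $O_1,A_1,\dots,O_\ell,A_\ell$. Here $A_\ell$ is the single output gate.

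\textbf{Construction.} Each gate becomes a vertex, and each wire an undirected edge.
\begin{itemize}
\item Input vertices $x_i$ get PDS-type initialization restricted to their \OR-successors, i.e.\ $\init(x_i)=\{x_i\}\cup N^+(x_i)$. All other vertices get empty initialization.
\item Every \AND-vertex is pre-marked.
\item Every vertex that is not on the \OR-layer $O_j$ is optional. The \OR-vertices in layers $O_2,\dots,O_\ell$ and the output are the targets; input and $O_1$ vertices are optional as well.
\end{itemize}

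\textbf{Main invariant.} The key claim is: an \OR-gate of layer $O_j$ evaluates to \True iff its vertex is marked in round $j$.
\begin{itemize}
\item For $j=1$ this holds by the initialization: an \OR-vertex is marked exactly when some selected input feeds it.
\item For the step, a pre-marked \AND-vertex $a\in A_j$ has neighbors $N^-(a)\subseteq O_j$ and exactly one out-neighbor $o\in O_{j+1}$. With threshold $1$, $a$ forces $o$ in the round after all its in-neighbors are marked.
\item To keep this exact, the \AND-vertex must not force backwards. While $o$ is unmarked, $a$ has at least two unmarked neighbors unless exactly one in-neighbor is false. In that case $a$ would wrongly force that in-neighbor (or force $o$ prematurely).
\end{itemize}
The backward force is the main obstacle. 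I would replace each wire $o\to a$ by an implication arc $o\to a$, or rather, as in the picture, make edges directional through implication arcs. Each implication arc costs two rounds, though, and with $\ell$ rounds for $2\ell$ layers we have exactly one round per layer. So instead I would keep the \OR$\to$\AND edges plain and handle the timing and the wrong direction differently.

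\textbf{Handling timing and direction.}
\begin{itemize}
\item \AND-vertices are pre-marked, so the \OR-in-neighbors of $a$ being marked by round $j$ lets $a$ force $o$ in round $j+1$. This is exactly one round per \OR/\AND layer pair.
\item A backward force by $a$ would mark an \OR-vertex of $O_j$ that is \False. Such a force needs $o$ to be already marked, and $o$ could only be marked via another \AND-in-neighbor that is \True. Since only $O_{j+1}$'s truth matters, marking extra $O_j$ vertices late is harmless only if it cannot cascade. I would argue by induction that it cannot, since the effect arrives too late within the round budget.
\item To fix the edge case where $a$ has a single unmarked in-neighbor and unmarked $o$ (two unmarked neighbors), threshold $1$ already blocks the force. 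For general $k$, attach $k-1$ leaves to each \AND-vertex, as in \cref{res:gf:wp-hardness}, and make the leaves optional.
\end{itemize}

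\textbf{Equivalence.} Solutions correspond to selected inputs, giving weight $=$ solution size. Selecting a non-input vertex is useless, since such vertices have empty initialization. Requiring only the output vertex to be marked by round $\ell$, with everything else optional, yields equivalence with satisfiability. The backward-force/timing analysis in the inductive invariant is the step I expect to need the most care.
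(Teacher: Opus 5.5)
\textbf{Gap: how the output gate is simulated.} Your core construction matches the paper's: inputs with $\init(x_i)=N[x_i]$, wires as undirected edges, pre-marked \AND-vertices carrying $k-1$ optional leaves, and the selection restricted to inputs. Your invariant (an $O_j$-gate is true iff its vertex lies in $M_j$) is also the paper's. The gap is the choice of non-optional vertices. You give mutually inconsistent specifications, and each explicit one fails.

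\begin{itemize}
\item \emph{Only the output vertex required} (the version your equivalence paragraph uses). This is trivial. You pre-mark every \AND-vertex, including the output, so the empty selection always solves the instance. An unpre-marked output would not help either: it can only be forced by a single marked $O_\ell$-neighbour, which gives \OR- rather than \AND-semantics, and not before round $\ell+1$.
\item \emph{All of $O_2,\dots,O_\ell$ mandatory} (your bullet version). This breaks the direction from assignments to forcing sets, because a satisfying assignment may leave intermediate \OR-gates false. For example, take $\ell=3$ and $O_3=\set{r_1,r_2}$ with $r_1=p_1\lor p_2$ and $r_2=p_3$. Let $p_1,p_3$ be reached from $x_1$, and $p_2$ from $x_2$, by disjoint single-input chains through $O_1,A_1,O_2$. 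Then $\set{x_1}$ satisfies the circuit. However, the $O_2$-vertex $f_2$ on the $x_2$-chain is not in $M_3$, since both of its \AND-neighbours have two unmarked non-leaf neighbours in $M_1$ and in $M_2$. No other single vertex works, so the forcing instance has no solution of size one.
\item \emph{Keeping the output as a pre-marked vertex} is harmful under any reading. Its non-leaf neighbours are exactly the $O_\ell$-vertices, so it forces backwards into $O_\ell$. For $\abs{O_\ell}=1$ and $\ell\ge 2$ it puts that vertex into $M_2$ regardless of the selection.
\end{itemize}

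\textbf{The fix.} The output \AND-gate is simulated not by a vertex but by the acceptance condition itself. Delete the output gate and make exactly the vertices of $O_\ell$ non-optional; inputs, earlier \OR-layers, \AND-vertices and leaves are optional. A selection is then a solution iff every $O_\ell$-vertex lies in $M_\ell$. By your invariant at $j=\ell$, this holds iff every in-neighbour of the output gate is true. This is what the paper does, and the rest of your plan then goes through.

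\textbf{Making the timing argument precise.} Your ``arrives too late'' argument can be tightened as follows.
\begin{itemize}
\item An \AND-vertex adjacent to an unmarked \OR-vertex still has its $k-1$ leaves unmarked, so it can force only once all its other non-leaf neighbours are marked.
\item Discard gates that cannot reach the output, so that every \AND-vertex has exactly one out-neighbour. Then, by induction on $j$, no $O_i$-vertex lies in $M_j$ for $i>j$.
\item Consequently a backward force from $A_{j+1}$ into $O_{j+1}$ cannot happen by round $j+1$. So an $O_{j+1}$-vertex in $M_{j+1}$ was forced forward by an $A_j$-vertex all of whose in-neighbours lie in $M_j$.
\end{itemize}
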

  \REwtwoelhardness*
  \begin{proof}
    Let $C$ be an instance of \textsc{WMNS[$2\ell$]}.
    By \cref{res:w2l:wmns-restriction}, we may assume that every \AND-gate in $C$ has at most one out-neighbor.

    We construct an equivalent extended generalized forcing instance $I$ from $C$ as illustrated in \cref{fig:circuit-to-forcing}.
    By \cref{res:generalized-forcing-extension}, this is equivalent to constructing a basic instance.
    First, we remove the output \AND-gate and replace all remaining gates and inputs with graph vertices.
    Denote the three resulting vertex sets by $V_{\text{in}}$, $V_\OR$ and $V_\AND$.
    We keep all edges of the circuit as undirected edges.
    For all input vertices $v \in V_{\text{in}}$, set the initialization $\init(v) = N[v]$.
    All other vertices have an empty initialization.
    We pre-mark all \AND-vertices.
    To account for the forcing threshold $k$, we attach $k-1$ new leaves to each \AND-vertex.
    All vertices in $I$ except the last layer of \OR-vertices are optional.
    As a convention, we refer to the input vertices as layer 1, to the first layer of \OR-gates as layer 2 and so on.

    The key idea of the construction is that the \True \OR-gates in layer $2j$ correspond to the \OR-vertices that become marked in round $j$.

    Now, we show that $C$ has a solution of size $d$ if and only if $I$ has a solution of size $d$.

    \pfdir{Satisfying assignment to forcing set:}
    Let $X$ be a satisfying assignment of $C$ with $\abs{X} = d$ and let $S$ be the set of input vertices corresponding
    to \True variables.
    We show by induction that $S$ is a solution of $I$.

    The true inputs in $X$ set their adjacent \OR-gates in the first layer to \True.
    This corresponds to the \OR-vertices in the first layer of the graph which are marked in initialization.
    Next, the \AND-gates in the following layer are evaluated.
    They output \True if all of their inputs are \True.
    This is replicated by the pre-marked \AND-vertices $v \in P$.
    These have $k$ neighbors left, the out neighbor and $k-1$ leaves.
    Once all in-neighbors are marked, $v$ can force its leaves and the next gate.
    Inductively, the last layer of \OR-gates becomes marked after precisely $\ell$ rounds of propagation.

    \pfdir{Forcing set to satisfying assignment:}
    Let $S$ be a solution of $I$.
    Since $\init(v) = \emptyset$ for all $v \notin V_{\text{in}}$ , selecting a non-input vertex has no effect on the initially marked vertices.
    Hence, $S \cap V_{\text{in}}$ is also a solution.
    Let $X$ be the assignment where an input variable is true if and only if its corresponding vertex is in $S$.
    Clearly, $X$ has weight at most $d$.
    We show that $X$ is a satisfying assignment of $C$.

    First observe that no \OR-vertex in layer $2i$ can become marked in fewer than $i$ rounds.
    This is enforced by the layered structure since forcing cannot skip a layer.

    We now use induction over the propagation rounds to show that the \OR-vertices that become marked in round $i$
    correspond to the \True \OR-gates in layer $2i$ of the circuit.
    More precisely, we show that if an \OR-vertex in layer $2i$ becomes marked in round $i$,
    its corresponding \OR-gate evaluates to \True.
    We count layers starting from the inputs, which are in layer one.

    In the first layer, exactly those \OR-vertices that are adjacent to a selected input are marked.
    This corresponds to the circuit where the \OR-gates that are adjacent to a \True input evaluate to \True.

    Now assume that in round $i$, every marked \OR-vertex in layer $2i$ corresponds to a \OR-gate that evaluates to \True.
    The next layer consists of \AND-vertices and \AND-gates, respectively.
    Each pre-marked \AND-vertex $v$ in layer $2i+1$ is adjacent to $k-1$ leaves and an \OR-vertex $w$ in layer $2i+2$.
    None of these vertices can be marked in round $i$.
    To force $w$, $v$ must have at most $k$ unmarked neighbors.
    Thus, $v$ can only force $w$ in round $i+1$ if all its in-neighbors, which are \OR-gates, are marked in round $i$.
    By our assumption, all corresponding \OR-gates evaluate to \True.
    Hence, the \AND-gate corresponding to $v$ evaluates to \True if and only if $v$ forces $w$.
    Consequently, an \OR-vertex in layer $2i+2$ can only become marked in round $i+1$ if its corresponding \OR-gate outputs \True.
    This proves the induction claim.

    Since $S$ is a solution, all \OR-vertices in layer $2\ell$ are marked by round $\ell$ and the layered structures ensures
    that they cannot become marked earlier.
    Thus, by the induction, the corresponding \OR-gates all evaluate to \True.
    Since these are exactly the in-neighbors of the output \AND-gate, $C$ evaluates to \True.
    Hence, $X$ is a satisfying assignment.
  \end{proof}

  \subparagraph*{Upper Bound}
  We now give an upper bound on the parameterized complexity by reducing \textsc{Generalized Forcing($k,\ell$)} to
  \textsc{Weighted Monotone Circuit Satisfiability($2\ell$)} for fixed integers $k$ and $\ell$.

  The core of this reduction is the propagation gadget illustrated in \cref{fig:propagation-gadget}.
  The idea is that we use a layer of \OR-gates to model which vertices are marked after $j$ propagation rounds.
  This layer is followed by a layer of \AND-gates which in turn model the propagation condition for each vertex in
  each neighborhood.
  Recall that a vertex $u$ can propagate to its neighbors if it is marked and has at most $k$ unmarked neighbors.
  For each $K \subseteq N(u)$ with $\abs{K} \leq k$, we add an \AND-gate that has all vertices in $N[V] \setminus K$ as inputs and verifies whether $u$ and all neighbors outside $K$ are marked.
  The \OR-gates of the next round collect the outputs of the \AND-gates of each vertex in each neighborhood gadget.

  To model the initialization, we add an input node for every vertex in the graph.
  The input of a node $u$ is then connected to the \OR-gates of $I(u)$ in the first propagation layer.
  Then, we add another \AND-gate and connect it to the last layer of \OR-gates.
  This \AND-gate serves as the output of the circuit.

  \begin{figure}
    \centering
    \begin{tikzpicture}[notedge/.style={dotted}]
      \begin{scope}[shift={(-7, 0)}]
        \graph[empty nodes,branch right=0.7] {
            {[nodes=node] w1["$w_1$"],w2["$w_2$"],w3["$w_3$"],/[dots],wd["$w_d$"]} -- {/[hidden], /[hidden], v["$v$" below,node]};
        };
      \end{scope}
      \graph[empty nodes,edges=semithick]{
          {rvj[or,"{$\rho(v,j)$}"], /[dots], w1j[or,<notedge,"{$\rho(w_1,j)$}"],w2j[or,<notedge],w3j[or,"{$\rho(w_3,j)$}"],/[hidden,dots],wdj[or,"{$\rho(w_d,j)$}"]}
        -- {av[and,draw=violet6,<violet6,>violet6], /[hidden], apj[and,draw=teal8,<teal8,>teal8]}
        -- {rvjj[or], /[dots], w1jj[or],w2jj[or],w3jj[or,>notedge],/[hidden,dots],wdjd[or,>notedge]};
        rvj -- apj;
      };
      \node[left=0.5 of rvj] {$\OR(2i)$};
      \node[left=0.5 of av] {$\AND(2i+1)$};
      \node[left=0.5 of rvjj] {$\OR(2i+2)$};
    \end{tikzpicture}
    \caption{
      Graph on the left and part of the propagation gadget on the right.
      The gadget is shown for forcing threshold $k=2$ and the vertex set $\set{w_1, w_2}$.
      Dotted lines represent non-edges.
      The monotonicity gadget for $v$ connects $\rho(v,j)$ to its counterpart $\rho(v, j+1)$.
    }
    \label{fig:propagation-gadget}
  \end{figure}

  \begin{restatable}{lemma}{REwtwoelcontained}
    \label{res:w2l-contained}
    Every instance of \textsc{$\ell$-round Generalized $k$-Forcing} with finite round limit $\ell$ can be transformed in
    polynomial time into an equivalent monotone $2\ell$-normalized circuit.
    The minimum solution size of the forcing instance is preserved as the weight of a satisfying assignment of the circuit.
  \end{restatable}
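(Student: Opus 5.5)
We construct the circuit layer by layer, following the propagation gadget of \cref{fig:propagation-gadget}, so that the \OR-gates in layer $2j$ compute exactly the marked set $M_j$. Let $G$, $\init$, $k$, $\ell$ be given.

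\textbf{Construction.} The layers are built as follows.
\begin{itemize}
  \item \emph{Inputs.} Create one input variable $x_u$ for every $u \in V(G)$.
  \item \emph{Layer $2$.} For each $v \in V(G)$, create an \OR-gate $\rho(v,1)$ with inputs $\set{x_u \mid v \in \init(u)}$. Under an assignment with true set $S$, $\rho(v,1)$ is \True iff $v \in \init(S) = M_1$.
  \item \emph{Layers $2j+1$ and $2j+2$, for $j = 1,\dots,\ell-1$.} For each $u \in V(G)$ and each $K \subseteq N(u)$ with $\abs{K} \le k$, create an \AND-gate $\alpha(u,K,j)$ with inputs $\set{\rho(w,j) \mid w \in N[u]\setminus K}$. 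For every $v \in K$, the \OR-gate $\rho(v,j+1)$ takes $\alpha(u,K,j)$ as input.
  \item \emph{Monotonicity gadget.} For each $v$, add an \AND-gate whose only input is $\rho(v,j)$ and feed it into $\rho(v,j+1)$. This keeps strict alternation of layers.
  \item \emph{Output.} Add an output \AND-gate with inputs $\rho(v,\ell)$ for all $v$.
  \item \emph{Degenerate case.} If $\ell = 1$, the circuit is only inputs, one \OR layer and the output; pad with trivial single-input gates if exact normalization depth $2\ell$ is required.
\end{itemize}
Since $k$ is fixed, there are at most $\abs{V}^{k}$ sets $K$ per vertex. The circuit therefore has polynomial size, $2\ell$ alternating layers, and is monotone.

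\textbf{Correctness.} We prove by induction on $j$ that $\rho(v,j)$ evaluates to \True iff $v \in M_j$, where $M_1 = \init(S)$ and $S$ is the set of true inputs.

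The base case $j=1$ holds by construction of layer $2$.

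For the step, suppose $v \in M_{j+1}$. Either $v \in M_j$, and the monotonicity gate fires, or some $u \in M_j \cap N(v)$ has $\abs{N(u)\setminus M_j} \le k$. In the second case choose $K = N(u)\setminus M_j$; then $v \in K$ and every input of $\alpha(u,K,j)$ lies in $M_j$, so the gate is \True.

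Conversely, suppose $\alpha(u,K,j)$ is \True with $v \in K$. Then $u \in M_j$ and $N(u)\setminus M_j \subseteq K$, so $u$ has at most $k$ unmarked neighbors. Hence $u$ forces all its unmarked neighbors, and $v \in M_{j+1}$ either because it was forced or because it was already marked.

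\textbf{Weight correspondence.} By the induction, the output is \True iff $M_\ell = V(G)$. So satisfying assignments of weight $d$ correspond exactly to selections $S$ with $\abs{S} = d$ whose initialization is an $\ell$-round $k$-forcing set. If the forcing instance asks for $\abs{S} \le d$ rather than exactly $d$, this is harmless. Adding vertices to a selection only enlarges $\init(S)$, and the forcing process is monotone: the circuit is monotone, so supersets of satisfying sets satisfy it. Hence a solution of size at most $d$ extends to one of size exactly $d$ when $d \le \abs{V}$.

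\textbf{Main obstacle.} The delicate point is the correctness of the \AND-gates. The gate must include $u$ itself among its inputs; that is why it uses $N[u]$ and not $N(u)$. The sets $K$ must range over subsets that may contain already-marked neighbors. With both choices, "at most $k$ unmarked" is captured by a monotone conjunction, without needing negation. The other point requiring care is keeping the layering exactly normalized, which the monotonicity gadget handles.
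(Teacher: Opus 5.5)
Your proposal is correct and is essentially the paper's construction: one \OR-layer $\rho(\cdot,i)$ per round, forcing \AND-gates for every $K \subseteq N(v)$ with $\abs{K}\le k$ that read $\rho(N[v]\setminus K, i-1)$ and feed $\rho(K,i)$, single-input monotonicity gates, an output \AND-gate, and the same round-by-round induction. One small point: the padding remark for $\ell=1$ is unnecessary, since inputs, one \OR-layer and the output \AND-gate already give depth $2\ell = 2$.
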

  \begin{proof}
    Let $I$ be an instance of \textsc{$\ell$-round Generalized $k$-Forcing} with graph $G$.
    We construct a monotone circuit $C$ with one input variable $x_v$ for each vertex $v \in V(G)$.
    Selecting $v$ in $I$ corresponds to setting $x_v = \True$.
    The inputs form the first layer of the circuit.

    The idea here is to simulate the propagation rounds through consecutive \OR- and \AND-layers in the circuit.
    If an \OR-gate in layer $2i$ outputs \True, this means that the corresponding vertex is marked in round $i$.
    We then use \AND-gates in the following \AND-layer to simulate the condition of the forcing rule for each sub-neighborhood of each vertex with at most $k$ vertices.
    The neighborhood of the input gates corresponds to the initialization and the output \AND-gate connects to the last layer of \OR-gates.
    Since the \True \OR-gates in each layer $2i$ correspond to the set of marked vertices in round $i$,
    every forcing instance with at most $\ell$ rounds can be simulated by a circuit with $2\ell$ layers.

    For every round $i \in \set{1, \dots, \ell}$ and every vertex $v$, we introduce an \OR-gate $\rho(v, i)$.
    This gate represents whether $v$ is marked in round $i$.
    The first layer of \OR-gates represents the initially marked set, thus add edges from each $x_v$ to $\rho(\init(v), 1)$.

    We simulate the forcing process using the \AND-gates.
    For each propagation round $i \in \set{2, \dots, \ell}$, and each vertex $v$ we add two kinds of gates.
    First, we add a monotonicity gate that connects $\rho(v, i-1)$ to $\rho(v, i)$.
    Then, for each sub-neighborhood $K \subseteq N(v)$ with $\abs{K} \leq k$, we add a forcing gate $\sigma(v, K, i)$ with inputs $\rho(N[v] \setminus K, i-1)$ and outputs $\rho(K, i)$.
    Finally, add one output \AND-gate adjacent to last layer of \OR-gates.

    The circuit has one \OR-layer and one \AND-layer for each round and gates have neighbors only in adjacent layer.
    Its depth is thus $2\ell$.
    Since $k$ is fixed, the number of forcing gates is polynomial in the vertex degree and thus the circuit size is polynomial in the graph size.

    \pfdir{Correctness of the construction}
    Let $S$ be a vertex selection and let $X$ be the corresponding assignment.
    We show by induction that the vertices marked by round $i$ are exactly those vertices $v$ for which $\rho(v, i)$ is \True.

    In the first round, our construction ensures that marked vertices $v$ correspond to some $\rho(v, 1)$
    that is adjacent to a \True input.

    Now assume that the correspondence holds up to some round $i$.
    We show that it continues to hold for round $i+1$.
    For each marked vertex, the monotonicity \AND-gate ensures that $\rho(v, i+1)$ is \True.
    If $v$ can force in round $i$, it has at most $k$ unmarked neighbors $K \subseteq N(v)$.
    Since $\rho(N[v] \setminus K)$ is \True by our assumption, $\sigma(v, K, i+1)$ outputs \True to all $\rho(K, i+1)$.
    Every marked vertex in round $i + 1$ thus corresponds to a \True \OR-gate.

    Conversely, if an \OR-gate $\rho(v, i)$ in layer $2i$ evaluates to \True, then its counterpart in layer $2i+2$ outputs \True through the monotonicity gate.
    If some $\rho(v, i+1)$ is \True but $\rho(v, i)$ is \False, there must be a \True forcing gate $\sigma(w, K, i+1)$ with $v \in K$.
    Then, by our assumption, the vertex $w$ is marked and has at most $k$ unmarked neighbors.
    Hence, all neighbors $K$, in particular $v$, are forced.
    The correspondence thus holds in both directions round $i+1$.

    \pfdir{Solution Equivalence}
    Now assume that $S$ is a solution.
    Then, all vertices are marked in round $\ell$ and, by our induction, the last layer of \OR-gates is \True.
    Thus the output \AND-gate evaluates to \True and the circuit is satisfied.

    Conversely, let $X$ be a satisfying assignment.
    Since the output gate is \True, the last layer of \OR-gates must also be \True.
    Then all vertices are marked in round $\ell$.
  \end{proof}

  The upper and lower bound match, thus we get the following result.
  \begin{restatable}{theorem}{REwtwoelcomplete}
    \label{res:w2l-complete}
    For fixed integers $k \geq 1$ and $\ell \geq 1$, \textsc{$\ell$-round Generalized $k$-Forcing} and
    \textsc{$\ell$-round $k$-Power Dominating Set} are $W[2\ell]$-complete.
  \end{restatable}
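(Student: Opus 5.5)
The plan is to combine the lower and upper bounds already established, using \textsc{Generalized Forcing} as the bridge in both directions.

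\emph{Hardness.} We reduce from \textsc{WMNS[$2\ell$]}, which is $W[2\ell]$-complete~\cite{downey_fixed-parameter_1995} because $2\ell$ is even.
\begin{enumerate}
  \item Apply \cref{res:w2l-hardness} to map an instance of \textsc{WMNS[$2\ell$]} to an extended instance of \textsc{$\ell$-round Generalized $k$-Forcing}. The circuit weight $d$ becomes the solution size $d$.
  \item Apply \cref{res:generalized-forcing-extension} to turn the extended instance into a basic one. This step needs a finite round limit, which holds since $\ell$ is fixed; that case is covered via \cref{res:target-vertices-bounded}.
  \item The composition runs in polynomial time and leaves the parameter unchanged, so it is a parameterized reduction. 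This gives $W[2\ell]$-hardness of \textsc{$\ell$-round Generalized $k$-Forcing}.
  \item Compose further with \cref{res:pds-hardness}, which also preserves $d$ and runs in polynomial time because $k$ is fixed. This gives $W[2\ell]$-hardness of \textsc{$\ell$-round $k$-Power Dominating Set}.
\end{enumerate}

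\emph{Membership.} We reduce to \textsc{WMNS[$2\ell$]}.
\begin{enumerate}
  \item Apply \cref{res:w2l-contained}, which maps an \textsc{$\ell$-round Generalized $k$-Forcing} instance with solution size $d$ to a monotone $2\ell$-normalized circuit. The circuit has a satisfying assignment of weight $d$ exactly when the forcing instance has a solution of size $d$.
  \item \textsc{WMNS} asks for weight exactly $d$, while the forcing problem asks for size at most $d$. Monotonicity of the circuit closes this gap: a satisfying assignment of weight $d' < d$ can be padded with further true inputs to weight exactly $d$. If fewer than $d$ inputs exist, we decide the instance directly, since then $d \geq |V|$ and selecting all vertices suffices to check.
  \item Since \textsc{$\ell$-round $k$-Power Dominating Set} is the special case $\init(v)=N[v]$, membership carries over.
\end{enumerate}

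\emph{Edge cases.} The only point needing care is that the reductions are uniform in the fixed constants. The circuit size is $O(\ell \cdot n \cdot \Delta^{k})$, which is polynomial for fixed $k$ and $\ell$. \Cref{res:w2l-contained} must also produce a genuinely normalized circuit, with alternating layers and an \AND\ output. This holds by construction, with the input layer followed by $\ell$ \OR/\AND\ pairs.

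For $\ell=1$ the problem is \textsc{Dominating Set}-like with no propagation. The same chain applies, since both lemmas are stated for all $\ell \geq 1$, and \cref{res:target-vertices-bounded} explicitly handles $\ell=1$. So no separate argument is needed.

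I expect no serious obstacle. The main thing to verify is the exact-weight versus at-most-$d$ padding step, together with checking that every intermediate transformation preserves the parameter exactly.
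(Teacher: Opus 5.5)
Your proposal is correct and matches the paper. The paper proves the theorem simply by combining \cref{res:w2l-hardness} (made basic via \cref{res:generalized-forcing-extension}) and \cref{res:pds-hardness} for hardness with \cref{res:w2l-contained} for membership; your padding argument for exact versus at-most weight supplies a detail the paper leaves implicit.

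For $\ell=1$, however, do not rely on the ``$G-O$'' shortcut in the proof of \cref{res:target-vertices-bounded}. In the construction of \cref{res:w2l-hardness} with $\ell=1$, the optional vertices are exactly the input vertices, which carry all the nonempty initializations, so deleting them destroys equivalence. Instead, argue directly that \textsc{$1$-round PDS} is \textsc{Dominating Set}, or replace the deletion by adding $O$ to every $\init(u)$ after handling the empty solution separately.
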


  \section{Hardness of unbounded \texorpdfstring{$k$-Forcing}{k-Forcing}}
  \label{sec:k-forcing}
  While \textsc{Zero Forcing} is fixed-parameter tractable, this has not been shown for \textsc{$k$-Forcing}.
  In this section we show that $k$-Forcing is in fact very unlikely to be fixed-parameter tractable.
  We show that raising the forcing threshold to $k=2$ already yields $W[P]$-hardness for \textsc{$k$-Forcing}.

  As an intermediate step, we extend instances of \textsc{$k$-Forcing} with pre-marked vertices.
  This yields another interesting result: when allowing pre-marked vertices, \textsc{Zero Forcing} becomes $W[P]$-hard, strengthening the previously known $W[2]$-hardness result~\cite{cazals_power_2019}.
  We also derive another bounded-round result: \textsc{Pre-Marked $\ell$-round $k$-Forcing} is hard for $W[2(\ell-2)]$.

  \begin{restatable}{lemma}{REpdstokforcing}
    \label{res:gf-to-premarked}
    Every instance of \textsc{$\ell$-round Generalized $k$-Forcing} where $\ell$ may be unbounded can be transformed in
    polynomial time into an equivalent extended instance of \textsc{$k$-Forcing} with pre-marked vertices.
    If $\ell$ is finite, the round limit increases to $\ell+2$; unbounded rounds remain unbounded.
    The transformation preserves the solution size.
  \end{restatable}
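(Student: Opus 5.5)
We will simulate an arbitrary initialization map $\init$ using ZF-type initialization together with pre-marked vertices, in the spirit of the proxy construction of \cref{res:pds-hardness}. By \cref{res:generalized-forcing-extension} we may assume the generalized instance $I$ (graph $G$, map $\init$, size $d$) is basic. As in \cref{res:pds-hardness}, we first dispose of the trivial cases: if $d \geq \abs{V}$ we decide by brute force, and if $I$ has an empty solution we output a trivial instance.

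\textbf{Construction.} Take $d+k+1$ disjoint copies $G_j$ of $G$, with vertices $\sigma(v,j)$. For each $v \in V(G)$ create a proxy $\rho(v)$. For each $w \in \init(v)$ and each copy $j$, connect $\rho(v)$ to $\sigma(w,j)$ through a two-step \emph{relay}. The relay is a new vertex $r(v,w,j)$ adjacent to $\rho(v)$ and to $\sigma(w,j)$, with $k-1$ pre-marked pendant helpers so that the thresholds work out. The purpose is that selecting $\rho(v)$ (which under ZF-type marks only $\rho(v)$) lets the marking reach $\init(v)$ in each copy within two rounds.

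This fan-out is the delicate part. Under ZF-type initialization a selected $\rho(v)$ with many unmarked relay neighbours cannot force them. So instead of letting $\rho(v)$ force, we will use implication arcs $\rho(v) \to \sigma(w,j)$ for every $w \in \init(v)$ and every $j$. These are available in extended instances and introduce exactly the two-round delay in the statement. All other vertices of the construction are pre-marked or optional where needed.

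\textbf{Removing the extensions.} We then apply \cref{res:extended-zf-instances} to remove the implication arcs and optional vertices while keeping the pre-marked vertices and the ZF-type initialization. Any selected gadget vertex can be swapped for an original one.

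One issue remains: in a copy, the vertex $\sigma(w,j)$ gains the arc gadget's neighbour $c$, which is pre-marked. Since $c$ is pre-marked, it does not disturb thresholds. Conversely, $\sigma(w,j)$ must not propagate back into the gadget, and \cref{res:implication-arcs} already guarantees this.

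Timing is handled as follows. Vertices of $\init(S)$ in each copy become marked in round $3$ instead of round $1$. Copy propagation then proceeds as in $G$, so $\ell$ rounds become $\ell+2$. To stop copies from starting early, we must ensure nothing in an unselected copy is marked before round $3$. This holds because copy vertices are neither pre-marked nor selected.

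\textbf{Correctness.} The forward direction is a direct simulation: $S \mapsto \rho(S)$. For the reverse direction we reuse the pigeonhole and induction argument of \cref{res:pds-hardness}. At least $k+1$ copies contain no selected vertex. By induction over the rounds, corresponding vertices in these copies are marked synchronously. Hence any proxy or gadget vertex adjacent to unmarked copy vertices sees at least $k+1$ of them and cannot force into a copy. Therefore, in an unselected copy, the only marks coming from outside are the implication-arc deliveries of $\init(\rho^{-1}(S'))$. It follows that $\rho^{-1}(S' \cap C)$ solves $I$.

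The main obstacle is handling selected vertices inside copies. Under ZF-type initialization they mark themselves, so they do have an effect. The plan is to rely on the $d+k+1$ copies, so that some unselected copy witnesses the solution. We also need the proxies themselves to be marked regardless of selection; we make the proxies optional to achieve this.
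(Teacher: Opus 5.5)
Your final construction is the paper's proof: disjoint copies of $G$, optional proxies $\rho(v)$ with implication arcs $\rho(v)\to\sigma(w,j)$ for $w\in\init(v)$, elimination of the extensions via \cref{res:extended-zf-instances}, pigeonhole to find an unselected copy, and the two-round shift. The paper uses only $d+1$ copies and no synchronization induction, because copies meet the rest of the graph only through implication arcs out of proxies that have no other incident connections; so an unselected copy receives exactly $\sigma(\init(\rho^{-1}(S'\cap C)),j)$ in round $3$ and nothing else from outside, which makes your abandoned relay idea and the claim that outside vertices see $k+1$ unmarked copy vertices superfluous (the latter does not even hold literally for the exit vertex of an arc gadget, which is adjacent to only one copy vertex).
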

  \REpdstokforcing*
  \begin{proof}
    Let $I$ be an instance of \textsc{$\ell$-round Generalized $k$-Forcing} with graph $G$, initialization $\init$ and solution size $d$.
    We construct an equivalent extended instance $I'$ of \textsc{$\ell+2$-round $k$-Forcing} or unbounded \textsc{$k$-Forcing} with implication arcs and optional vertices.
    By \cref{res:extended-zf-instances}, we may use optional vertices and implication arcs since these later can be eliminated.
    The only remaining instance extension is the set of pre-marked vertices.

    The construction is similar to the proof of \cref{res:pds-hardness}.
    However, with $k$-forcing, we need to modify the construction since we cannot use domination for marking the initial set.
    Instead, we rely on implication arcs to simulate the initialization.
    The disadvantage of this approach is that implication arcs introduce additional propagation rounds.
    With bounded propagation rounds, this results in less tight hardness bounds.
    In more detail, we construct $G'$ as follows.

    We create $d+1$ disjoint copies $G_j$ of $G$.
    For each vertex $v \in V(G)$, denote by $\sigma(v, j)$ its counterpart in $G_j$.
    Since there are $d+1$ copies, there will always be one that has no selected vertex.

    Then insert $\abs{V}$ new optional proxy vertices $C = \set{\rho(v) \mid v \in V(G)}$.
    We for each $v \in V(G)$ and each $w \in \init(v)$ add an implication arc from $\rho(v)$ to each $\sigma(w, j)$.
    These simulate the initialization in the copies.

    \pfdir{Generalized Forcing to $k$-Forcing}
    Let $S$ be a solution of $I$.
    We show that $S' = \rho(S)$ is a solution of $I'$.

    In the initialization, $S'$ marks exactly the vertices in $S'$, i.e. $\rho(S)$.
    In each copy $G_j$, the implication arcs thus mark the vertices $\sigma(w, j)$ with $w \in \init(S)$.
    The copies are isomorphic to $G$, thus, starting with a corresponding initial marked set,
    forcing in each copy $G_j$ corresponds exactly to forcing in $G$.
    Since $S$ is a solution all vertices in $G$ become marked in $\ell$ rounds, hence,
    the vertices in each $G_j$ become marked in $\ell+2$ rounds.

    The only remaining vertices in $G'$ are the proxies in $C$, which are optional.
    Thus $S'$ is a solution of $I'$.

    \pfdir{$k$-Forcing to Generalized Forcing}
    Let $S'$ be a solution of $I'$.
    As a first step, we show that $S^\ast = S' \cap C$ is also a solution of $I'$.
    We then show that $S = \rho^{-1}(S^\ast)$ is a solution of $I$

    Since there are $d+1$ copies, at least one copy $G_{j^\ast}$ contains no selected vertex.
    The only way vertices in this copy can become marked is through the implication arcs from $C$.
    Accordingly, no vertex in $G_{j^\ast}$ becomes marked before round $3$.
    Denote by $M$ the set of vertices marked in $G_{j^\ast}$ through implication arcs.
    Since $S'$ is a solution, and no vertex in $M$ is marked before round 3, there are only $\ell$ rounds left to mark all vertices in $G_{j^\ast}$.
    Thus, $M$ must be an $\ell$-round $k$-forcing set of $G_{j^\ast}$.
    By isomorphism, $\sigma^{-1}(M, j^\ast)$ is also an $\ell$-round $k$-forcing set of $G$.

    Since each $v \in C$ has only implication arcs, $v$ can only become marked if it is selected.
    Then the vertices in $G_{j^\ast}$ are marked starting only with $S^\ast = S' \cap C$.
    The vertices in $M$ thus correspond exactly to $\init(S)$.
    Hence, $S$ is a solution of $I$.
  \end{proof}

  Combining \cref{res:w2l-complete,res:gf-to-premarked} allows us to derive hardness bounds for \textsc{Pre-Marked $\ell$-round $k$-Forcing} with bounded and unbounded propagation.
  Since this problem is a special case of \textsc{$\ell$-round Generalized $k$-Forcing}, the upper bounds still apply.
  \begin{restatable}{theorem}{REpremarkedzfhardness}
    \label{res:pre-marked-zf-hardness}
    For fixed $\ell \geq 3$ and fixed $k \in \Natural$, the problem \textsc{Pre-Marked $\ell$-round $k$-Forcing} is $W[2(\ell-2)]$-hard and in $W[2\ell]$.
    \textsc{Pre-Marked $k$-Forcing} with unbounded propagation is $W[P]$-complete.
  \end{restatable}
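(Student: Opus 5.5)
The plan is to obtain the hardness parts by chaining the reductions of \cref{sec:gf:def,sec:w2l,sec:k-forcing}, and the membership parts from the general upper bounds for \textsc{Generalized Forcing}.

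\textbf{Bounded rounds.} For fixed $\ell \geq 3$, set $\ell' = \ell - 2 \geq 1$. By \cref{res:w2l-complete}, \textsc{$\ell'$-round Generalized $k$-Forcing} is $W[2\ell']$-hard; its hardness comes from \cref{res:w2l-hardness} applied to \textsc{WMNS[$2\ell'$]}.
\begin{enumerate}
\item Apply \cref{res:gf-to-premarked} to an $\ell'$-round instance. This yields an equivalent extended instance of \textsc{$(\ell'+2)$-round $k$-Forcing}, i.e.\ with round limit $\ell$. It has ZF-type initialization and uses implication arcs, optional vertices and pre-marked vertices.
\item Apply \cref{res:extended-zf-instances} to remove the implication arcs and optional vertices. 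Pre-marked vertices remain, which is exactly an instance of \textsc{Pre-Marked $\ell$-round $k$-Forcing}.
\end{enumerate}
All steps run in polynomial time and preserve the solution size, so this is a parameterized reduction and gives $W[2(\ell-2)]$-hardness.

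For membership, \textsc{Pre-Marked $\ell$-round $k$-Forcing} is a special case of \textsc{$\ell$-round Generalized $k$-Forcing}. Concretely, eliminate the pre-marked vertices by \cref{res:pre-marked-elimination}, which puts $P$ into every $\init(v)$. Then apply \cref{res:w2l-contained} to obtain a monotone $2\ell$-normalized circuit, so the problem lies in $W[2\ell]$.

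\textbf{Unbounded rounds.} Start from \textsc{Power Dominating Set}, which is $W[P]$-complete~\cite{blasius_efficient_2025}.
\begin{enumerate}
\item By \cref{res:gf:wp-hardness}, it reduces to \textsc{Generalized $k$-Forcing}.
\item \cref{res:gf-to-premarked} with unbounded rounds turns this into an extended ZF-type instance.
\item \cref{res:extended-zf-instances} removes the arcs and optional vertices, leaving \textsc{Pre-Marked $k$-Forcing}.
\end{enumerate}
This gives $W[P]$-hardness. Membership in $W[P]$ follows as for \textsc{Generalized $k$-Forcing}: guess $d$ vertices nondeterministically (the $O(d\log n)$ guessed bits bound the nondeterminism), then simulate the forcing process from $S \cup P$ deterministically in polynomial time~\cite{cai_parameterized_1997}. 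For $k = 1$ this includes \textsc{Pre-Marked Zero Forcing}.

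\textbf{Main obstacle.} The main step to get right is that the round accounting and the gadget elimination compose correctly.
\begin{itemize}
\item \cref{res:gf-to-premarked} costs exactly two extra rounds, one for selection and one implication-arc delay.
\item The subsequent elimination in \cref{res:extended-zf-instances} must not add rounds or break the ZF-type initialization. For bounded $\ell$ it relies on the timer gadget of \cref{res:target-vertices-bounded}, which marks optional vertices only in the final round. I would check that this still fits within round $\ell$, and that the selected-gadget-vertex exchange argument carries over for the implication-arc gadgets introduced here.
\item Finally, hardness needs $\ell' \geq 1$, which is why the theorem requires $\ell \geq 3$.
\end{itemize}
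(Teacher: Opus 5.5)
Your proposal is correct and follows the paper's argument. Hardness comes from composing the $W[2\ell']$-hardness (for $\ell'=\ell-2$) and the $W[P]$-hardness of \textsc{Generalized $k$-Forcing} with \cref{res:gf-to-premarked} and the gadget elimination of \cref{res:extended-zf-instances}, and membership holds because the pre-marked problem is a special case of \textsc{$\ell$-round Generalized $k$-Forcing}. One small correction to your round accounting: both extra rounds come from the implication-arc delay, since an arc marks its head in round $j+2$ when its tail is marked in round $j$, while selection costs nothing because it takes place in round~1 in both instances.
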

  The bounds for \textsc{Pre-Marked $\ell$-round $k$-Forcing} leave a gap between $W[2(\ell-2)]$ and $W[2\ell]$.
  We do not expect the upper bound to be tight.
  Indeed, for $\ell=1$, there are no forcing rounds, and thus \textsc{Pre-Marked $1$-round $k$-Forcing} asks the trivial question whether $S \cup P = V(G)$.
  In contrast, \textsc{$1$-round Power Dominating Set} is identical to \textsc{Dominating Set} and is therefore $W[2]$-complete.
  This suggests that bounded pre-marked forcing captures a lower level of the $W$-hierarchy than bounded power domination.

  With $k \geq 2$, and an unbounded round budget, we can also eliminate the pre-marked vertices by connecting them through a $k$-ary tree.
  Selecting the tree root will eventually mark the pre-marked vertices.
  However, since the size of the tree depends on the input, this does not apply to any fixed round budget $\ell$.
  In the following lemma, we show the construction for unlimited rounds.

  \begin{restatable}{lemma}{REkforcingpremarked}
    \label{res:k-forcing-premarked}
    Every instance of \textsc{Pre-Marked $k$-Forcing} with fixed $k > 1$ can be transformed into an equivalent instance of \textsc{$k$-Forcing}.
    The transformation increases the solution size by one.
  \end{restatable}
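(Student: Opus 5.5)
The plan is to replace the pre-marked set $P$ by a rooted $k$-ary tree $T$ whose leaves are exactly the vertices of $P$, so that selecting the root $r$ of $T$ eventually marks all of $P$. Let $(G,P,d)$ be the pre-marked instance. I would first dispose of trivial cases: if $P=\emptyset$, output $G$ with budget $d+1$ (adding an isolated vertex to absorb the extra budget unit), and if the empty selection already works, output a trivial yes-instance. Otherwise build $G'$ from $G$ by adding new internal tree vertices such that every internal vertex has at most $k$ children, the root is $r$, and the leaves of $T$ are the vertices of $P$. Because $k\ge 2$, $T$ has at most $\abs{P}$ internal vertices, so the construction is polynomial. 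The new instance is $(G',d+1)$, without pre-marked vertices and with unbounded rounds.

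For the direction from $(G,P,d)$ to $(G',d+1)$, I take a solution $S$ and show that $S\cup\set{r}$ works in $G'$.
\begin{itemize}
\item Top-down propagation in $T$: each internal vertex, once marked, has a marked parent and at most $k$ children, so it forces all its children. Hence after at most $\abs{V(T)}$ rounds every vertex of $P$ is marked.
\item Monotonicity of the forcing closure: enlarging the marked set never disables a force, because a vertex's unmarked-neighbour count only decreases. Therefore, once $S\cup P$ is marked, the process in $G'$ marks at least everything that $S\cup P$ marks in $G$. For a vertex $p\in P$, its tree parent is already marked when $p$ is reached, so the extra edge does not raise $p$'s unmarked-neighbour count.
\item Since the number of rounds is unbounded, the delay caused by $T$ is harmless, and all of $V(G')$ becomes marked.
\end{itemize}

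For the converse, let $S'$ be a solution of $G'$ with $\abs{S'}\le d+1$. The first step is a simulation claim: at every round, the marked vertices of $G'$ lying in $V(G)$ form a subset of the marked set of the process in $G$ started from $(S'\cap V(G))\cup P$. I would prove this by induction on the rounds. A vertex $w\in V(G)\setminus P$ can only be forced by some $u\in V(G)$, since tree vertices are adjacent only to tree vertices and to $P$. The forcing vertex $u$ is also marked in the $G$-process, and its unmarked $G$-neighbours there form a subset of its unmarked $G'$-neighbours. Hence the same force is available in $G$. It follows that $S'\cap V(G)$ is a solution of the pre-marked instance. If $S'$ contains some vertex of $T\setminus P$, then $\abs{S'\cap V(G)}\le d$ and we are done.

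The main obstacle is the remaining case, where $S'$ avoids $T\setminus P$ and so has $d+1$ vertices in $V(G)$. Here $T$ could be marked bottom-up: a vertex of $P$ whose only unmarked neighbour is its parent forces the parent, and so on up the tree. To close this gap, I would modify the root so that it must be paid for. Concretely, I would attach to $r$ a small pendant gadget whose vertices can only be forced by $r$, and choose its size so that $r$ never has at most $k$ unmarked neighbours unless a gadget vertex is selected; this forces at least one gadget vertex into every solution. The delicate part is that the same gadget must still allow the selected vertex to force downward, so the pendant sizes have to be tuned carefully against the threshold $k$. If this tuning fails, my fallback is an exchange argument: show that bottom-up marking of $r$ requires $P$ to be marked through $V(G)$ alone, and that in this situation one vertex of $S'\cap V(G)$ can be traded for the role of $r$, yielding a pre-marked solution of size $d$.
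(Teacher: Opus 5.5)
Your forward direction is fine. Your simulation claim, that $(S'\cap V(G))\cup P$ is a pre-marked solution whenever $S'$ solves $G'$, is correct and cleaner than the paper's replace-by-the-root argument. The gap is the case you flag yourself, and it is fatal for the construction as stated. Take $k=2$. Let $G$ be $K_{2,3}$ with parts $\{p_1,w\}$ and $\{u_1,u_2,u_3\}$, plus an isolated vertex $p_2$. Let $P=\{p_1,p_2\}$ and $d=0$. The pre-marked instance is a no-instance: $p_1$ has three unmarked neighbours and nothing else can force. In your $G'$ (root $r$ adjacent to $p_1,p_2$), the single vertex $u_1$ is a $2$-forcing set:
\begin{itemize}
\item $u_1$ forces $p_1$ and $w$;
\item $w$ forces $u_2$ and $u_3$;
\item $p_1$ now has only $r$ unmarked and forces it;
\item $r$ forces $p_2$.
\end{itemize}
So $(G',1)$ is a yes-instance and the reduction is wrong. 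The same example refutes your fallback exchange claim, since no pre-marked solution of size $0$ exists. The root-pendant idea also fails in its natural form. Suppose $r$ has $m$ pendant vertices that only $r$ can force. Blocking bottom-up marking needs $m\ge k+1$. But then, after one pendant is selected and forces $r$, $r$ still has at least $(m-1)+1\ge k+1$ unmarked neighbours and cannot force downward.

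The missing idea is to block upward propagation where it enters $T$, namely at $P$, not at the root. The paper adds $k+1$ new leaves $P'$ to $T$ and turns $P\cup P'$ into a clique. Suppose no vertex of $V(T)\setminus P$ is selected. Then the first such vertex to become marked would have to be forced by some $p\in P$. That is impossible, because $p$ still has all $k+1$ vertices of $P'$ as unmarked neighbours. Hence every solution selects a vertex outside $V(G)$, so $\abs{S'\cap V(G)}\le d$. Your simulation argument then finishes the proof; it remains valid because all new edges touch only $P$ and new vertices. The forward direction survives too. With $r$ selected, every vertex of $P\cup P'$ is forced by its tree parent, after which all clique edges join marked vertices.
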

  \begin{proof}
    Let $I$ be an instance of \textsc{Pre-Marked $k$-Forcing} with graph $G$ and pre-marked vertices $P$.
    We construct an equivalent instance \textsc{$k$-Forcing} instance $I'$.

    The idea is that, with unbounded propagation, we do not need pre-marked vertices to actually become marked in the first round,
    since forcing is monotone.
    Furthermore, with $k > 1$, selecting the root of a $k$-ary tree of height $h$ will mark all leaves of the tree in at most $h$ rounds.

    In more detail, we construct $G'$ as follows.
    We construct a $k$-ary tree $T$ whose leaves are $P$ and $k+1$ new vertices $P'$.
    Denote the root of $T$ by $r$.
    The tree $T$ does not have to be complete; if $\abs{P'}$ is not a power of $k$, we create a tree with fewer leaves.
    The tree has height $h \in O(\log_k \abs{P'})$.
    Last, connect the vertices in $P \cup P'$ into a clique.

    \pfdir{Pre-marked to no pre-marked: }
    Let $S \subseteq V(G)$ such that $S \cup P$ is a $k$-forcing set of $G$.
    We show that $S' = S \cup \set{r}$ is a $k$-forcing set of $G'$.

    Since $r$ has only $k$ neighbors, $r$ can mark these neighbors by forcing.
    These neighbors in turn now have only $k$ unmarked neighbors left and can propagate.
    Inductively, all vertices in $T$ become marked in $\log_k{\abs{P}}$ rounds, in particular all vertices in $P$.

    Thus, after $\log{\abs{P}}$ rounds, all vertices in $V(G)$ have the same unmarked neighbors as in $G$ and
    propagation becomes equivalent.

    \pfdir{No pre-marked to pre-marked: }
    Let $S'$ be a $k$-forcing set of $G'$.
    We show that $S = S' \cap V(G)$ is a solution of the original instance $I$.

    First, assume that $S'$ contains no vertex from $T$.
    Then, only vertices in $P$ can be marked by forcing from $V(G)$.
    Our construction, ensures that there are at least $k+1$ vertices in $P'$ that are not adjacent to any vertex
    in $V(G)$ and the vertices in $P \cup P'$ form a clique.
    Thus, each vertex in $P$ has at least $k+1$ unmarked neighbors and cannot propagate,
    leaving all other vertices in $T$ unmarked.
    Therefore, at least one vertex in $T$ must be selected.
    We can replace all selected vertices in $T$ by the root $r$ without increasing the solution size, since,
    as we saw above, selecting $r$ causes all of $T$ to be marked by propagation.

    If $r$ is selected, all vertices in $P$ become marked after $\log_k \abs{P}$ rounds.
    Since we have unlimited propagation rounds, $k$-forcing in $G'[V(G)]$ then becomes equivalent to
    $k$-forcing in $G$ with pre-marked vertices $P$.
  \end{proof}

  Since the solution size is only modified by a constant, we have a parameterized reduction and thus the hardness of \textsc{Generalized $k$-Forcing} transfers to \textsc{$k$-Forcing} with $k > 1$.

  \begin{restatable}{theorem}{REkforcingwp}
    \label{res:k-forcing-wp}
    When parameterized by the solution size, \textsc{$k$-Forcing} is $W[P]$-complete for fixed $k>1$.
  \end{restatable}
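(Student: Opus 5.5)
Membership and hardness are handled separately; hardness is obtained by chaining reductions already established in the paper.

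For membership in $W[P]$, I would argue as for \textsc{Generalized $k$-Forcing}. \textsc{$k$-Forcing} is the special case with ZF-type initialization $\init(v)=\set{v}$. A candidate selection $S$ of size $d$ can be verified in polynomial time by simulating the forcing process for at most $\abs{V}$ rounds. Hence a nondeterministic machine that guesses $O(d\log\abs{V})$ bits and then runs deterministically in polynomial time decides the problem, which places it in $W[P]$ by the characterization of~\cite{cai_parameterized_1997}.

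For hardness, I would start from \textsc{Power Dominating Set}, which is $W[P]$-complete~\cite{blasius_efficient_2025}, and compose four steps.
\begin{enumerate}
  \item By \cref{res:gf:wp-hardness}, transform the instance into an equivalent instance of \textsc{Generalized $k$-Forcing} with the same solution size.
  \item Apply \cref{res:gf-to-premarked} with unbounded rounds. This yields an equivalent extended instance of \textsc{$k$-Forcing} with pre-marked vertices, optional vertices and implication arcs, again with the same solution size.
  \item Use \cref{res:extended-zf-instances} to eliminate the implication arcs and optional vertices while keeping ZF-type initialization and the pre-marked vertices. This gives an instance of \textsc{Pre-Marked $k$-Forcing}.
  \item Since $k>1$ and propagation is unbounded, apply \cref{res:k-forcing-premarked} to obtain an equivalent plain \textsc{$k$-Forcing} instance with solution size $d+1$.
\end{enumerate}
Every step runs in polynomial time, and the parameter grows from $d$ to $d+1$. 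The composition is therefore a parameterized reduction, and $W[P]$-hardness follows.

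The step that needs the most care is the interface between \cref{res:gf-to-premarked} and \cref{res:extended-zf-instances}. I need to confirm that the extensions produced by the first lemma (optional proxies and implication arcs into the copies) are exactly those the second lemma can remove under ZF-type initialization. For unbounded rounds without pre-existing optional vertices, the cleanup gadget of \cref{res:target-vertices-unbounded} is the relevant one, and it only introduces pre-marked vertices, which \cref{res:k-forcing-premarked} absorbs. A second point is an edge case of \cref{res:k-forcing-premarked}: if $P$ is empty, the tree gadget is unnecessary and the instance is already plain, so I would handle that case directly, or pad the parameter by adding an isolated vertex.
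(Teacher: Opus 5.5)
Your proposal is correct and is essentially the paper's argument: membership by guessing and verifying a solution as for \textsc{Generalized $k$-Forcing}, and hardness by the chain \textsc{PDS} $\to$ \textsc{Generalized $k$-Forcing} (\cref{res:gf:wp-hardness}) $\to$ \textsc{Pre-Marked $k$-Forcing} (\cref{res:gf-to-premarked} with \cref{res:extended-zf-instances}) $\to$ \textsc{$k$-Forcing} (\cref{res:k-forcing-premarked}), with the parameter growing only from $d$ to $d+1$. One small correction to your side remark: the cleanup gadget of \cref{res:target-vertices-unbounded} also adds implication arcs into the guard vertices, so in the unbounded case you first remove the optional proxies and then eliminate all arcs via \cref{res:implication-arcs-unbounded}, which is what leaves only pre-marked vertices; also, the case $P=\emptyset$ needs no special treatment, since the tree over $P'$ alone still costs exactly one extra selection.
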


  \section{\texorpdfstring{$\ell$-round $k$-Forcing}{l-round k-Forcing} is Fixed Parameter Tractable}
  \label{sec:k-forcing-fpt}
  Bounding the round limit in \textsc{$\ell$-round Power Dominating Set} interpolates in complexity between \textsc{Dominating Set} and \textsc{Power Dominating Set}.
  One might expect to see a similar effect with \textsc{$\ell$-round $k$-Forcing}.
  However, we show that bounding the propagation rounds immediately makes the problem fixed-parameter tractable.
  For \textsc{$\ell$-round Zero Forcing}, an FPT-algorithm was already given by Aazami~\cite{aazami_hardness_2008}.
  We present a new FPT-algorithm for all finite round limits $\ell$ and forcing thresholds $k$.
  Unlike the previous algorithm, our algorithm does not use a tree-decomposition.
  Instead, we show that for fixed solution size $d$ and limited rounds, the vertex count of all yes-instances is bounded by a constant.
  This yields a trivial polynomial kernel for \textsc{$\ell$-round $k$-Forcing}.
  Whether \textsc{Zero Forcing} with unbounded propagation admits a polynomial kernel remains unclear.

  \begin{restatable}{theorem}{RElroundkforcing}
    \label{res:l-round-k-forcing}
    For fixed integers $\ell$ and $k$, \textsc{$\ell$-round $k$-Forcing} is fixed parameter tractable
    and has a polynomial kernel when parameterized by the solution size.
  \end{restatable}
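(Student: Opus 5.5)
The plan is to show that every yes-instance of \textsc{$\ell$-round $k$-Forcing} (ZF-type initialization, $\init(v)=v$) with solution size $d$ has at most $f(d,k,\ell)$ vertices, where $f$ is polynomial in $d$ for fixed $k$ and $\ell$. The kernel then works as follows. If $\abs{V(G)} > f(d,k,\ell)$, output a constant-size no-instance; otherwise output the instance unchanged. The FPT algorithm follows by brute force over all $\binom{f(d,k,\ell)}{\le d}$ selections, each checked by simulating $\ell$ rounds in polynomial time. So the whole proof reduces to one counting bound.

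The bound comes from tracking $\abs{M_i}$ round by round. Let $S$ be a solution with $\abs{S}\le d$, so $M_1=S$ and $\abs{M_1}\le d$. In round $i$, a vertex $w\in M_i$ forces only if $\abs{N(w)\setminus M_i}\le k$. It therefore adds at most $k$ new vertices to $M_{i+1}$, and only vertices of $M_i$ can force. Hence
\begin{equation*}
  \abs{M_{i+1}} \le \abs{M_i} + k\abs{M_i} = (k+1)\abs{M_i}.
\end{equation*}
By induction $\abs{M_\ell}\le (k+1)^{\ell-1} d$. Since $S$ is an $\ell$-round solution, $M_\ell=V(G)$, so $\abs{V(G)}\le (k+1)^{\ell-1}d$. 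For fixed $k,\ell$ this is linear in $d$.

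Edges need separate care, since kernel size should count the encoding. A simple graph on $n$ vertices has at most $n^2$ edges, so the encoding size is $O(((k+1)^{\ell-1}d)^2)$, which is polynomial. I also need to handle the convention that $\ell>\abs{V}$ means unbounded. For fixed finite $\ell$ I treat the round limit as a constant in the statement, so the bound applies verbatim. The reduced instance keeps the same $\ell$, $k$ and $d$.

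I expect no real obstacle; the only point to verify carefully is that the count is legitimate. Forcing adds only neighbours of already-marked vertices with at most $k$ unmarked neighbours, and ZF-type initialization marks exactly $\abs{S}$ vertices. This is precisely where the argument fails for PDS-type or generalized initialization, where $\abs{M_1}$ is unbounded in $d$, and for pre-marked vertices, where $P$ is unbounded. It is worth remarking this to explain the contrast with \cref{res:w2l-complete} and \cref{res:pre-marked-zf-hardness}.
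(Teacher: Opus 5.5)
Your proposal is correct and matches the paper's proof: the bound $|M_{i+1}| \le (k+1)|M_i|$ with ZF-type initialization gives $|V(G)| \le d(k+1)^{\ell-1}$ for every yes-instance, so larger instances are mapped to a fixed no-instance and the rest are solved by brute force. Your explicit bound on the number of edges, which keeps the encoding size polynomial, is a detail the paper leaves implicit.
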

  \begin{proof}
    The key observation is that in each propagation round, a single marked vertex can force at most $k$ new vertices.
    This means that a given initially marked set $\init(S)$ can mark at most $\abs{\init(S)} \cdot (k+1)^{\ell-1}$ vertices.
    With limited propagation rounds and ZF-style initialization $\init(S) = S$, when fixing the solution size $\abs{S} = d$ this is a constant limit on the number of vertices that can be forced.
    Any instance exceeding this limit is thus trivially a no-instance.
    By exhaustive search among all possible solutions of the instances with at most $d (k+1)^{\ell-1}$ vertices,
    we obtain an FPT-algorithm.

    We also obtain a trivial kernel by mapping all larger instances to a fixed no-instance.
  \end{proof}

  \subparagraph*{A brief detour: Kernels for Connected Forcing}
  This positive kernel result contrasts with a kernel lower bound for another variant, \textsc{Connected Zero Forcing}~\cite{brimkov_complexity_2017,davila_bounds_2018}.
  In this variant, a solution must also induce a connected graph.
  We show that \textsc{Connected Zero Forcing} does not admit a polynomial kernel unless $\text{coNP} \subseteq \text{NP/poly}$ by reduction from \textsc{Red-Blue Dominating Set (RBDS)}.
  Given a bipartite graph with partitions $R$ and $B$, RBDS asks whether there is a set $D \subseteq R$ with $\abs{D} \leq d$ that dominates $B$.
  RBDS is NP-complete and, when parameterized by $d + \abs{B}$, it does not admit a polynomial kernel unless $\text{coNP} \subseteq \text{NP/poly}$~\cite{dom_kernelization_2014}.
  \begin{restatable}{proposition}{REconnectedforcingkernelbound}
    \label{res:connected-forcing-kernel-bound}
    When parameterized by the solution size, \textsc{Connected Zero Forcing} does not admit a polynomial kernel unless $\text{coNP} \subseteq \text{NP/poly}$.
  \end{restatable}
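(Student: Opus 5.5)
We use a polynomial parameter transformation from \textsc{RBDS}, parameterized by $d+\abs{B}$, to \textsc{Connected Zero Forcing}, parameterized by solution size. Since \textsc{Connected Zero Forcing} is in NP and \textsc{RBDS} is NP-complete, a polynomial kernel for \textsc{Connected Zero Forcing} would then give one for \textsc{RBDS}, contradicting~\cite{dom_kernelization_2014} unless $\text{coNP} \subseteq \text{NP/poly}$. So the task is to build, from $(R,B,d)$, a graph $G'$ and a budget $d' = \mathrm{poly}(d+\abs{B})$ such that $G'$ has a connected zero forcing set of size $d'$ if and only if some $D \subseteq R$ with $\abs{D}\le d$ dominates $B$.

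The construction has three parts.
\begin{itemize}
  \item \emph{Blue side.} Each $b\in B$ gets a small gadget whose vertices must essentially all be selected. The simplest choice is to give $b$ three pendant leaves, or a pendant path structure, since a vertex with several leaves forces at most one of them. The blue gadgets thus contribute about $c\abs{B}$ selected vertices.
  \item \emph{Red side.} Each $r\in R$ gets a vertex $r$ adjacent to its blue neighbours, plus a pendant path or leaf pair. The pendant structure can be marked by forcing only if $r$ is selected or $r$ is forced in a controlled way.
  \item \emph{Connectivity.} Add a hub vertex adjacent to all of $B$, or make $B$ a clique, so that the blue gadget vertices form a connected core.
\end{itemize}
The point is that a selection of about $c\abs{B}+d$ vertices can only be connected if the selected red vertices attach to the core through blue vertices, so selecting $r$ ``covers'' its blue neighbours. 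The budget is $d' = c\abs{B}+d+O(1)$, which is polynomial in $d+\abs{B}$, as required.

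For the forward direction, the core together with a dominating set $D$ is connected because every $r\in D$ has a neighbour in $B$. We then verify that forcing proceeds from the core to the red vertices, and that each red vertex, once it has at most one unmarked neighbour, finishes its pendant structure. For the reverse direction, we first normalize any connected solution: a selected vertex inside a red or blue gadget can be swapped for the gadget's attachment vertex without breaking connectivity or forcing. After normalization, the red vertices that are neither selected nor forced must be handled by showing that every blue vertex needs a selected red neighbour.

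The main obstacle is this reverse-direction requirement that each blue vertex have a selected red neighbour. Connectivity alone only guarantees that selected red vertices touch $B$, not that they cover all of it. My first attempt is to attach to each $b$ a private structure, for example two pendant leaves $b',b''$ hung off $b$, together with a gadget that can only be resolved when $b$ has a selected red neighbour. The idea is that a vertex of $B$ with no selected red neighbour has at least two unmarked red neighbours (after duplicating each red vertex) and at least two unmarked leaves, so no marked vertex can ever force into that part; this requires red twins to block forcing. If this local argument fails, the fallback is to duplicate each red vertex into false twins $r,r'$, since twins can never be distinguished by forcing unless one of them is selected, and to count budget carefully so that exactly one vertex per needed red choice is affordable.
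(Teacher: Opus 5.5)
Your overall framework is right: a polynomial parameter transformation from \textsc{RBDS} parameterized by $d+\abs{B}$, pendant leaves that make the blue vertices mandatory, and a budget of the form $c\abs{B}+d+O(1)$. But the step that actually encodes domination is missing, and your connectivity device works against it.

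\textbf{The gap.} With a hub adjacent to all of $B$, or with $B$ made a clique, the mandatory blue vertices are already connected among themselves. Connectivity of $S$ then says nothing about which red vertices are selected. This is exactly the obstacle you name and leave unresolved. Your fallback does not fix it. Two unmarked false twins can never be forced, because any vertex adjacent to one is adjacent to both. So every twin pair needs a selected member, which costs about $\abs{R}$, and that is not polynomially bounded in $d+\abs{B}$. A leaf pair on every red vertex has the same defect. Your forward direction also has a hole. A blue vertex with two or more unselected red neighbours can never force, so nothing in your sketch guarantees that $R\setminus D$ ever gets marked.

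\textbf{The missing idea.} Put the hub on the red side. Take a vertex $\tilde u$ adjacent to all of $R$, and attach two pendant leaves to $\tilde u$ and to each $b\in B$. Each such vertex is a cut vertex with at least three components. It therefore lies in every connected forcing set: otherwise $S$ sits in one component, and the vertex keeps two unmarked neighbours in the others forever. One of its two leaves must also be selected. The only non-leaf neighbours of $b$ are red. So connecting $b$ to $\tilde u$ inside $G[S]$ requires a selected red neighbour of $b$, which means $S\cap R$ dominates $B$.

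\textbf{Marking the red vertices cheaply.} The paper threads $R$ on a single induced path $Q=\Path{\ell_1,r_1,s_1,r_2,\dots,r_p,\ell_p}$ with new connector vertices. Once $\tilde u$ and $B$ are selected, every neighbour of a red vertex off $Q$ is marked. Selecting one $Q$-neighbour of some $r\in D$ then propagates along all of $Q$, after which $\tilde u$ and each blue vertex force their last leaf. Conversely, if no vertex of $Q\setminus R$ is selected, every red vertex keeps two unmarked neighbours there, and $Q\setminus R$ is never marked. This fixes the budget at $d'=d+2\abs{B}+3$, with exactly $d$ left for $S\cap R$.
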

  \REconnectedforcingkernelbound*
  \begin{proof}
    Let $I$ be an instance of RBDS with graph $G$ and $V(G) = R \cup B$.
    We construct an instance $I'$ of \textsc{Connected Zero Forcing} with graph $G'$ as follows.
    First, we add a new vertex $\tilde{u}$ adjacent to all vertices in $R$.
    Then, we attach $2$ leaves to $\tilde{u}$ and to each vertex in $B$.
    The leaves enforce that these vertices are in a connected forcing set, since, if $v$ is a cutvertex that separates $3$ or more components, $v$ must be in every connected forcing set.
    Let $r_1, \dots, r_p$ be the red vertices.
    We add a path $Q = \Path{\ell_1, r_1, s_1, r_2, s_2, \dots, r_p, \ell_p}$ with new vertices $\ell_1, \ell_p, s_1, \dots, s_{p-1}$.
    Note that $Q$ is an induced path.

    We show that $G'$ has a connected forcing set of size $d' = d + 2 \abs{B} + 3$ if and only if $I$ has a red-blue dominating set $D$ of size at most $d$.
    The idea is that a connected forcing set must select $\tilde{u}$ and all vertices in $B$ along with one leaf of each.
    Therefore, to obtain a connected solution one neighbor of each $b \in B$ must be selected.

    Let $D$ be a red-blue dominating set of $I$.
    Then selecting $D$, all vertices in $B$, $\tilde{u}$, one leaf of each vertex in $B$ and of $\tilde{u}$, and a neighbor on $Q$ of some vertex in $D$ yields a connected forcing set $S$ of $I'$.
    Clearly, $S$ has size $\abs{D} + 2 \abs{B} + 3$; it is easy to see that $G[S]$ is connected.
    Since we select two adjacent vertices in $Q$ and no path vertex has an unmarked neighbor outside $Q$, all vertices in $Q$ become marked.
    Then $\tilde{u}$ and the vertices in $B$ have only one unmarked neighbor left which they can force.

    Conversely, let $S$ be a connected forcing set of $I'$ with $\abs{S} \leq d'$.
    Since all vertices in $B$ and $\tilde{u}$ each have two leaves, one leaf must be selected; otherwise there remain two unmarked neighbors and the leaves can never become marked.
    Furthermore, since the vertices in $B$ and $\tilde{u}$ are cutvertices with $3$ components, they must be selected.
    Moreover, at least one vertex in $Q \setminus R$ must be selected;
    otherwise, all vertices in $R$ have two unmarked neighbors in $Q \setminus R$ and cannot force.
    Therefore, there are $2 \abs{B} + 3$ selected vertices outside $R$ and $\abs{S \cap R} \leq d$.
    Assume that $S \cap R$ is no red-blue dominating set, i.e. there is $b \in B$ that has no neighbor in $S \cap R$.
    Then $G[S]$ is disconnected, resulting in a contradiction.

  \end{proof}

  \section{Conclusion}
  \label{sec:conclusion}

  We classify variants of \textsc{Zero Forcing} and \textsc{Power Dominating Set} along the choices of initialization, round limit and forcing threshold.
  Our classification shows that the parameterized complexity of the resulting problem is determined by the interaction of all three choices.
  Notably, introducing a fixed limit on the propagation rounds of \textsc{Power Dominating Set} captures the even levels of the $W$-hierarchy
  and increasing the forcing threshold in \textsc{Zero Forcing} to $k=2$ causes a jump in complexity from FPT to $W[P]$-completeness.
  Even with higher propagation threshold, \textsc{$\ell$-round $k$-Forcing} remains fixed parameter tractable and admits a polynomial kernel.
  In fact, we show that for fixed solution size and bounded propagation, the solution space collapses to instances with at most a constant number of vertices.
  Whether standard \textsc{Zero Forcing} admits a polynomial kernel remains open; in particular, we show that another variant, \textsc{Connected Zero Forcing}, does not admit a polynomial kernel under standard assumptions.

  Our hardness results require fixed forcing thresholds and round limits.
  If the forcing threshold is part of the input, we expect most results to hold, despite required changes to the constructions.
  However, the FPT algorithm and kernelization for \textsc{$\ell$-round $k$-Forcing} rely on $k$ being fixed and cannot easily be transferred.
  It is easy to see that the unbounded case reduces to the variable-$\ell$ case by setting $\ell = \abs{V(G)}$.

  In graphs with bounded treewidth several of the variants we studied were shown to be solvable in polynomial time~\cite{guo_improved_2005,aazami_hardness_2008}.
  However, \textsc{Generalized Forcing} with its arbitrary initialization is too expressive to admit a simple adaptation of these algorithms.
  Indeed, we can model every instance of \textsc{Dominating Set} as an instance of \textsc{Generalized Forcing} with a graph of isolated vertices which has treewidth 0.

  We introduced the generalized forcing framework as a reduction tool but the problem may be of interest in its own right.
  Under suitable restrictions on the initialization, other algorithmic results for \textsc{Zero Forcing} and \textsc{Power Dominating Set} may transfer to \textsc{Generalized Forcing}.
  In particular, solving the problem through the implicit hitting set approach should yield a straightforward solution technique~\cite{brimkov_computational_2019,smith_new_2022,blasius_efficient_2025}.

  Not all problem variants can be classified only by initialization, round limit and forcing threshold.
  \textsc{Throttling} minimizes not the solution size but the combination of solution size and round limit~\cite{brimkov_power_2019}.
  In \textsc{Skew Forcing}, unmarked vertices with sufficiently few unmarked neighbors may force, too~\cite{ima-isu_research_group_on_minimum_rank_minimum_2010}.
  Understanding how these additional dimensions fit into the complexity landscape remains an interesting question for future work.

  Overall, we show that forcing problems are sensitive to small changes in the problem statement.
  Indeed, by choosing different initializations, forcing threshold and round limits, one obtains forcing problems ranging from FPT to complete problems in every even level $W[2\ell]$ of the $W$-hierarchy up to $W[P]$.
  Our work provides a systematic view of the interaction between initialization, round limit and forcing threshold in the parameterized complexity of forcing problems.

  \phantomsection
  \addcontentsline{toc}{section}{References}
  \bibliography{bibliography}

  \end{document}

%% file: settings/graphics.tex
\usepackage{ninecolors}
\usepackage{tikz,pgfplots}
\pgfplotsset{compat=1.18}
\usetikzlibrary{
  arrows.meta,
  backgrounds,
  calc,
  decorations.pathreplacing,
  decorations.markings,
  fit,
  graphs,
  graphs.standard,
  positioning,
  quotes,
  shapes.geometric,
}
\usepgflibrary{plotmarks}

\tikzset{
  >/.tip=Stealth,
  ->-/.style={postaction=decorate,decoration={markings,mark={at position #1 with \arrow{>}}}},
  ->-/.default=0.5,
  -<-/.style={postaction=decorate,decoration={markings,mark={at position #1 with \arrow{<}}}},
  -<-/.default=0.5,
  node/.style={fill,draw, circle, inner sep=0, outer sep=0, minimum size=2mm},
  small node/.style={node, minimum size=1.3mm},
  dist/.style={node,fill=none,star, star points=11, star point height=1.5pt},
  marked/.style={fill=none},
  objective/.style={rectangle},
  selectable/.style={double},
  and/.style={node,fill=white,node contents={$\wedge$}},
  or/.style={node,fill=white,node contents={$\vee$}},
  t/.style={draw=none,fill=none},
  c/.style={coordinate},
  hv/.style={rounded corners,to path={-| (\tikztotarget)}},
  vh/.style={rounded corners,to path={|- (\tikztotarget)}},
  private leaf length/.initial=3mm,
  private leaf spread/.initial=30,
  private leaf angle/.initial=60,
  private leaf count/.initial=4,
  private leaves/.style={
    /utils/exec={
      \pgfmathtruncatemacro{\PLn}{\pgfkeysvalueof{/tikz/private leaf count} - 1}
      \xdef\PLn{\PLn}
      \pgfmathsetmacro{\PLa}{\pgfkeysvalueof{/tikz/private leaf angle}}
      \xdef\PLa{\PLa}
      \pgfmathsetmacro{\PLs}{\pgfkeysvalueof{/tikz/private leaf spread}}
      \xdef\PLs{\PLs}
    },
    append after command={
      \pgfextra{
      \draw[very thin]
      foreach \i in {1,...,\pgfkeysvalueof{/tikz/private leaf count}}{
        let \n1={\PLa + (\i - 1 - (\PLn) / 2) * \PLs / \PLn} in
        (\tikzlastnode) -- ++(\n1:\pgfkeysvalueof{/tikz/private leaf length})
      };
    }
  }},
}
\tikzset{
  dots/.style={node contents={$\dots$}, draw=none, fill=none},
  vdots/.style={node contents={$\cdots$}, rotate=90,circle,draw=none,fill=none},
  graphs/.cd,
  hidden/.style={not source, not target},
  dots/.style={hidden,/tikz/dots},
  vdots/.style={hidden,/tikz/vdots},
  grow down,
  branch right,
}

\newcommand{\legendnode}[1]{\tikz \node[#1] {};}

\makeatother

\newcommand{\DeclareNPointedStar}[3][0.8]{%
  \begingroup
  \count@=0\relax
  \dimen@=\dimexpr360pt/#3\relax

  \toks@={}%

  \loop
  \ifnum\count@<#3\relax
  \edef\pgf@temp{%
    \noexpand\pgfpathlineto{%
      \noexpand\pgfpointpolar%
      {\strip@pt\dimexpr90pt+\count@\dimen@\relax}%
      {\noexpand\pgfplotmarksize}}%
    \noexpand\pgfpathlineto{%
      \noexpand\pgfpointpolar
      {\strip@pt\dimexpr90pt+\count@\dimen@+\dimen@/2\relax}%
      {\noexpand#1*\pgfplotmarksize}%
    }%
  }%

  \toks@=\expandafter\expandafter\expandafter{%
    \expandafter\the\expandafter\toks@
    \pgf@temp
  }%

  \advance\count@ by1\relax
  \repeat

  \edef\pgf@temp{%
    \endgroup
    \noexpand\pgfdeclareplotmark{#2}{%
      \noexpand\pgfpathmoveto{\noexpand\pgfqpoint{0pt}{\noexpand\pgfplotmarksize}}
      \the\toks@
      \noexpand\pgfpathclose%
      \noexpand\pgfusepathqfillstroke%
    }%
  }%
  \pgf@temp
}

\makeatother